\documentclass[twoside]{article}

\usepackage[accepted]{aistats2021}

\setlength{\pdfpageheight}{11in}
\setlength{\pdfpagewidth}{8.5in}

\usepackage[round]{natbib}

\bibliographystyle{apalike}

\usepackage{amsmath}

\usepackage{enumitem}
\usepackage[ansinew]{inputenc}
\usepackage{upgreek}
\usepackage{graphicx} 
\usepackage{textcomp} 
\usepackage{amsfonts} 
\usepackage{amsmath}
\usepackage{amssymb}
\usepackage{yfonts}
\usepackage{mathtools}
\usepackage{mathrsfs}
\usepackage{latexsym}
\usepackage{array}
\usepackage{float}
\usepackage{amsthm} 
\usepackage{calc}
\usepackage{perpage}
\usepackage{textcomp}
\usepackage{verbatim} 
\usepackage{marvosym}
\usepackage{hieroglf}
\usepackage{multirow} %
\usepackage{rotating} %
\usepackage{dsfont} %

\usepackage[english]{babel}
\usepackage[hidelinks]{hyperref} 
\usepackage{url}
\usepackage{color}
\usepackage{tikz}
\usetikzlibrary{shapes,arrows,positioning,calc}
\usepackage{float}
\usepackage{soul}

\newcommand{\normzero}[1]{\ensuremath{\!|\!| #1 | \! |_{0}}}

\newcommand{\normtwo}[1]{\ensuremath{\!|\!| #1 | \! |_{2}}}

\newtheorem{theorem}{Theorem}[section]

\newtheorem{lemma}{Lemma}[section]

\newtheorem{definition}{Definition}[section]

\DeclareMathOperator*{\sign}{sign}

\newcommand{\1}{{\rm 1}\mskip -4,5mu{\rm l} }

\newcommand{\argmin}{\mathop{\mathrm{arg\,min}}}

\newcommand{\seq}[1]{\{1,\dots,#1\}}

\def\E{\mathbb{E}}

\def\R{\mathbb{R}}

\newcommand{\rp}{\mathbb{R}^p}
\newcommand{\rpp}{\mathbb{R}^{p\times p}}
\newcommand{\rnp}{\mathbb{R}^{n\times p}}

\newcommand{\fdr}{\operatorname{FDR}}
\newcommand{\fdp}{\operatorname{FDP}}
\newcommand{\efdp}{\widehat{\fdp}}

\newcommand{\graph}{\mathcal{G}}
\newcommand{\vset}{\mathcal{V}}
\newcommand{\eset}{\mathcal{E}}
\newcommand{\eeset}{\widehat{\mathcal{E}}}

\newcommand{\datamat}{X}

\newcommand{\vecx}{\mathbf{x}}
\newcommand{\invcov}{{\widehat\Sigma}^{-1}}
\newcommand{\rvx}{x}

\newcommand{\covmat}{\Sigma}
\newcommand{\dist}{\mathcal{N}_p}

\newcommand{\thold}{\hat t}
\newcommand{\hypo}{\mathscr{H}_{(i,j)}:\covmat^{-1}_{ij}=0}
\newcommand{\statun}{{\statu}^{\,\circ}}
\newcommand{\est}{\widehat{\cormat}}
\newcommand{\estnull}{\est^{\circ}}

\newcommand{\h}{\mathscr{H}}
\newcommand{\K}{\mathcal{K}}
\newcommand{\setreordered}{{\mathcal{\statmat}}^{\circlearrowleft}}
\newcommand{\setreorderedsize}{n^\circlearrowleft}
\newcommand{\esetMod}{\ensuremath{\eset'}}

\newcommand{\Generaledge}{\ensuremath{\mathcal{S}}}

\newcommand{\cormat}{R}
\newcommand{\nullcormat}{R^{\circ}}
\newcommand{\statmat}{ \widehat W}
\newcommand{\statu}{ \widehat T}

\usepackage{pdfpages}

\begin{document}

\twocolumn[

\aistatstitle{False Discovery Rates in Biological Networks}

\aistatsauthor{Lu Yu\And Tobias Kaufmann \And Johannes Lederer }

\aistatsaddress{ 
Department of Statistical Sciences\\
University of Toronto 
\And  
NORMENT, 
University of Oslo\\  Oslo University Hospital 
\And 
Department of Mathematics\\
Ruhr-University Bochum } ]

\begin{abstract}
The increasing availability of data has generated unprecedented prospects for  network analyses in many biological fields, such as neuroscience (e.g., brain networks), genomics (e.g., gene-gene interaction networks), and ecology (e.g., species interaction  networks).
A powerful statistical framework for estimating such networks is Gaussian graphical models, but standard estimators for the corresponding graphs are prone to large numbers of false discoveries. 
In this paper, we introduce a novel graph estimator based on knockoffs that imitate the partial correlation structures of unconnected nodes. 
We then show that this new estimator provides accurate control of the false discovery rate and yet large power.
\end{abstract}

\section{Introduction}
Biological processes can often be formulated as networks; examples include gene-gene regulation networks~\citep{Emmert2014, Hecker2009}, functional brain  networks~\citep{Bullmore2009}, and microbiome networks~\citep{Kurtz2015}.
A common statistical framework for such networks are Gaussian graphical models~\citep{Lauritzen1996}.
(Undirected) Gaussian graphical models describe the biological data as i.i.d.\@ observations of a random vector~$\vecx:=(\rvx_1,\dots,\rvx_p)^\top$ that follows a multivariate normal distribution $\dist(\mathbf{0}_p,\covmat),$ where $\covmat\in\rpp$ is a symmetric, positive definite matrix. 
The graph~$\graph:=(\vset,\eset)$ with node set~$\vset:=\{1,\dots,p\}$ and edge set~$\eset:=\{(i,j)\in \vset \times \vset:  i\neq j, \,\covmat^{-1}_{ij}:=(\covmat^{-1})_{ij}\neq 0\}$ then captures which pairs of the sample vector's coordinates are  dependent conditionally on all other coordinates:
$\rvx_i$ is conditionally independent of~$\rvx_j$ given all other coordinates of~$\vecx$ if and only if~$(i,j)\in\eset$.
For example, in modeling functional brain networks based on functional Magnetic Resonance Imaging (fMRI), 
$p$ is the number of brain regions under consideration, $\rvx_i$ is the activity in the $i$th region,  
and the edge set~$\eset$ denotes the directly connected pairs of regions.

A number of estimators for the edge set~$\eset$ are known. 
Besides  simplistic correlational approaches, 
popular estimators are neighborhood selection~\citep{Mein2006}, which combines node-wise lasso estimates,
and graphical lasso~\citep{Friedman2008, Yuan2007}, which maximizes an $\ell_1$-penalized log-likelihood. 
These two estimators have been equipped with sharp prediction and estimation guarantees even for high-dimensional settings, where the number of samples is not much larger than the number of nodes~$p$~\citep{Ravikumar2011, Rothman2008,Zhuang2018}.
In contrast to such prediction and estimation results,
what is less well understood for  high-dimensional Gaussian graphical models is inference.

Our objective is inference in terms of  control over the false discovery rate (FDR), which is the expected proportion of falsely selected edges over all selected edges.
Such control can make network estimation more reliable, which is particularly useful in biology as many biological networks seem to be hard to unravel---see~\citep{Zhang2018} for corresponding comments regarding brain imaging, for example.
Formally, the FDR is defined as 
\begin{equation}\label{re:fdr}
    \fdr:=\E[\fdp]\,,
\end{equation}
where 
 \begin{equation}\label{re:fdp}
    \fdp:=
    \frac{\#\bigl\{(i,j): (i,j)\notin\eset \text{ and }(i,j)\in \eeset \, \bigr\}}{\#\bigl\{(i,j):(i,j)\in \eeset \,\bigr\}\vee 1} 
\end{equation}
is the  false discovery proportion for an estimator  that returns the  edge set $\eeset\subset \vset\times \vset$, and $a\vee b:=\max\{a,b\}.$
We say that an estimator controls the FDR at level $q$ if $\fdr\le q.$
In the language of hypothesis testing, FDR control is the adjustment to multiple testing for the  hypotheses $\hypo$ for $i\neq j.$

We establish an estimator based on knock-offs. 
In a regression-type setting, knock-offs are ``fake predictors'' that allow one to approximately count the number of falsely included variables ~\citep{Barber2015,Candes2016,Dai2016}.
The knock-offs are supposed to maintain the original features' correlation structure but to be only weakly correlated  with the original features.
Since the relevant predictors tend to have stronger association with the response than their knock-off counterparts, the  number of falsely included variables can be approximated by comparing the estimated signals of the original predictors and their knock-off counterparts.
In a graphical model setting, we introduce knock-offs 
 as ``fake edges.''
Rather than maintaining correlation structures among the original nodes,
 they mimic partial correlations between separate, conditionally independent pairs of nodes. 
We then compare the signals of the sample partial correlations and their knock-off counterparts.

Our contributions can be summarized as follows.
\begin{enumerate}
    \item We introduce a method for FDR control in graphical modeling, where inferential methods have been scarce:
    we compare the existing popular graph estimation methods and show their limitations on the FDR control.
    \item We further support our method both mathematically and numerically:
    we establish theoretical guarantee for both approximate and exact FDR control in Theorems~\ref{re:thm1} and~\ref{re:thm2}; and in Section~\ref{sec:sim}, we demonstrate the proposed method achieves the FDR control and yields higher power than other popular graph estimation methods.
    \item We apply the proposed method to three biological network data sets, and show in Section~\ref{sec:realdata} that our method provides new insights into biological data.
\end{enumerate}

We provide a free implementation that can be  applied to  networks within and beyond the exemplified domains on \url{https://github.com/LedererLab/GGM-FDR}.

\paragraph{Related literature}
\cite{Drton2004} provides  conservative simultaneous confidence intervals for the elements of the precision matrix~$\Sigma^{-1}$  in Gaussian graphical models.
\cite{Van2004} studies the tail probability of the proportion of false positives via the family-wise error rate to obtain asymptotic FDR control in $n\to\infty.$ 
\cite{Drton2007} uses \cite{Van2004}'s approach in a multiple testing framework about conditional independence to obtain asymptotic FDR control in $n\to\infty$.
\cite{Liu2013} uses a multiple testing framework about conditional independence to obtain asymptotic FDR control in $n,p\to\infty$. 
\cite{Jankova2015} establishes element-wise confidence intervals for~$\Sigma^{-1}$.

\paragraph{Outline of the paper}

The rest of this paper is organized as follows. 
In Section~\ref{sec:motivation},  we demonstrate that new methodology is indeed needed for FDR control in Gaussian graphical models.  
In Section~\ref{sec:method}, we introduce our approach and prove its effectiveness both mathematically and numerically.
In Section~\ref{sec:realdata}, we apply our pipeline to three biological network data sets.
In  Section~\ref{sec:discussion}, we conclude with a discussion.
All the proofs are deferred to the supplement.

\section{Motivation}\label{sec:motivation}

\newcommand{\figuremotivation}[1]{
\begin{figure}[#1]
\centering
\includegraphics[width=63mm]{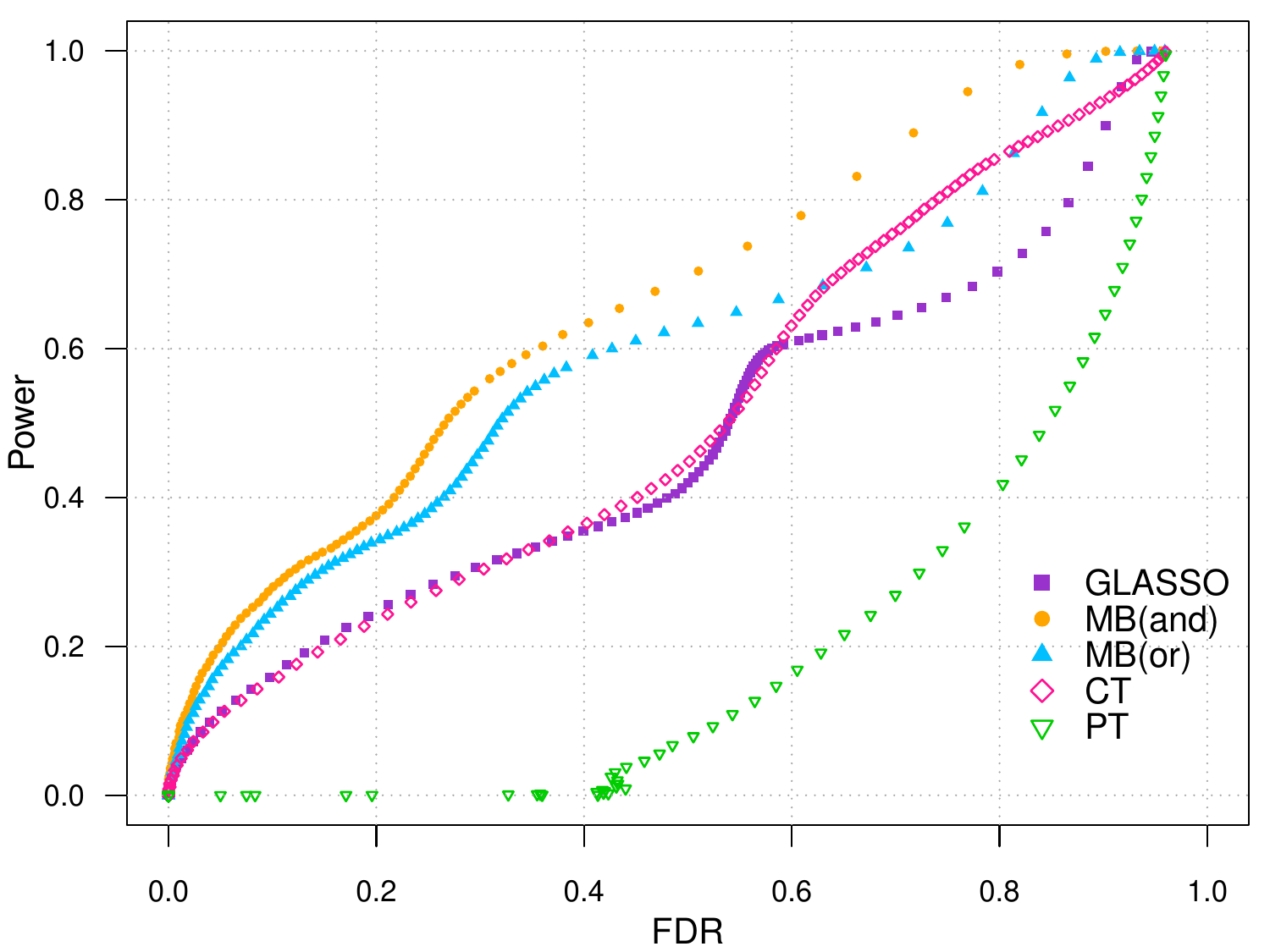}
\vspace{-3mm}
\caption{FDR and power for GLASSO, MB(and),  MB(or), CT, and PT as functions of the tuning parameters.
None of the five methods provides a tuning parameter that leads to both small FDR and large power, and in any case, it is not clear how to calibrate the tuning parameters accordingly in practice.}
\label{fig:motivation}
\end{figure}}

\figuremotivation{h}

We now illustrate numerically why standard methods for estimating Gaussian graphical models do not provide satisfactory FDR control for edge selection. 
Five methods are considered: graphical lasso (GLASSO), neighborhood selection with the ``and-rule'' (MB(and)) and the ``or-rule'' (MB(or)), thresholding the correlation matrix (CT), and thresholding the partial correlation matrix (PT).
The number of nodes is set to $p=400.$ 
The \texttt{huge} package in \texttt{R}~\citep{Zhao2012} is used to generate a covariance matrix~$\covmat$ that commensurates with an undirected band graph model;
in fMRI studies, for example, band graphs reflect that 
connectivities are expected to decrease with increasing spatial distance between the regions~\citep{Bu2017}. 
The condition number of the covariance matrix $\covmat$ is set to $200,$ and the sparsity level is set to $1/25$; 
these settings yield graphs that are diverse and moderately dense. 
Finally, $20$~independent data sets with each one consisting of $n=800$~independent samples from $\dist(0,\covmat)$ are generated.

Using again the \texttt{huge} package, the estimators are computed along a fine grid of tuning parameters.
The estimators' accuracy is evaluated in terms of FDR---see~\eqref{re:fdr}---and in terms of power 
\begin{equation*}
    \text{Power}:= \frac{\# \Bigr\{(i,j): (i,j)\in \eset \text{ and }(i,j)\in \eeset\,\Bigr\}}{\#\Bigr\{(i,j):(i,j)\in \eset \Bigr\}\vee 1}\,,
\end{equation*}
which is the proportion of the number of correctly estimated edges to the total number of edges.

Both FDR and power are averaged over the  20~data sets.

Figure~\ref{fig:motivation} contains the FDR/power-curves along the tuning parameter paths. 
There is not necessarily a tuning parameter that leads to small FDR and large power simultaneously. 
And more importantly, FDR and power can be measured in simulations but not in practice;
this means that \emph{even if} there was a tuning parameter that leads to small FDR and large power, it would be unclear how to find it in practice.
In particular, known calibration schemes such as cross-validation~\citep{Arlot2010}, AIC~\citep{Akaike1974}, BIC~\citep{Schwarz1978}, permutation~\citep{Sabourin2015}, and AV~\citep{Chichignoud2016} are designed for different objectives and are, therefore, not suitable for this task.
Taken together,  standard estimators for Gaussian graphical models do not imply sensible FDR control.

\section{Method}\label{sec:method}

In this section, we introduce our strategy to FDR control  and establish both mathematical and numerical support for its accuracy.
A main ingredient of our strategy are knockoffs that  imitate additional partial correlations.
Accordingly, we refer to our method as ``KO.''

\subsection{The KO Strategy}\label{subsec:method}

The KO strategy consists of three steps: First, we equip the sample partial correlations with knock-off counterparts.
Second, we compare the sample partial correlations and their counterparts through corresponding test statistics. 
Third, we produce estimates based on these test statistics by defining a data-driven threshold.

The three mentioned steps now read in detail:

\textit{Step 1: Constructing knock-offs.}

The starting points of our statistical analysis are the partial correlations. 
The partial correlations give us  direct access to the hypotheses~$\hypo$ via the Hammersley-Clifford theorem~\citep{Grimmett1973}: for any Gaussian random vector~$\vecx=(\rvx_1,\dots,\rvx_p)^\top\sim\mathcal N_p(\mathbf{0}_p,\covmat)$, it holds---see also \cite[Pages 129--130]{Lauritzen1996}---that
\begin{align*}
    \rvx_i \perp \rvx_j |\mathbf{x}_{\vset\setminus \{i,j\}}\, ~\Longleftrightarrow~ \,  \covmat^{-1}_{ij}=0  ~\Longleftrightarrow~ \, \rho_{ij\cdot \vset\setminus\{i,j\}}=0\,,
\end{align*}
where $\rho_{ij\cdot \vset\setminus\{i,j\}}$ denotes the partial correlation between the variables~$\rvx_i$ and~$\rvx_j$ given the remaining $p-2$-dimensional  vector~$\mathbf{x}_{\vset\setminus \{i,j\}}.$

We now use  classical properties of sample correlations and sample partial correlations derived by \cite{Fisher15,Fisher21,Fisher24}.
Consider the data matrix~$\datamat = (\vecx^1,\dots,\vecx^n)^\top\in\rnp,$ where~$\vecx^1,\dots,\vecx^n\in\rp$ are independent and identically distributed  samples from $\mathcal N_p(\mathbf{0}_p,\covmat)$ and assume that $n>p$.
(The latter condition  does not exclude high-dimensional settings in general:  $n\approx p$ \textit{cannot} be approached with classical inferential methods,
and the number of parameters in graphical models is $p(p-1)/2$,
which can be much larger than~$n$ even for $p<n$.).
\cite{Fisher15}~derives the distribution of the sample correlation
\begin{equation*}
  C_{ij}:=\frac{\sum_{l=1}^n(\vecx^l)_i(\vecx^l)_j}{\sqrt{\sum_{l=1}^n((\vecx^l)_i)^2}\sqrt{\sum_{l=1}^n((\vecx^l)_j)^2}}
\end{equation*}
of the coordinates~$i$ and~$j$;
in particular,
that paper yields that if the population correlation is zero,
the statistic
\begin{equation*}
  \frac{C_{ij}}{\sqrt{\bigl(1-(C_{ij})^2)/(n-2)}}
\end{equation*}
follows a Student's t-distribution with $n-2$ degrees of freedom.
\cite{Fisher24} then shows that the corresponding sample \emph{partial} correlation,
which we write as an entry
\begin{align*}
    \cormat_{ij}:=-\frac{(\invcov)_{ij}}{\sqrt{(\invcov)_{ii}(\invcov)_{jj}}}\,,
\end{align*}
 of a matrix~$\cormat\in\R^{p\times p}$  with $\widehat\Sigma:=\datamat^\top\datamat$ the sample covariance matrix of the data matrix~$\datamat$,
has the same distribution as~$C_{ij}$ but with an effective sample size of $n-(p-2)$, where $p-2$ is the number of elements in $\vset\setminus\{i,j\}$.
We can, therefore, conclude that assuming the null-hypothesis~$\hypo$,
then the random variable 
\begin{equation}\label{re:asympT}
Z:=\frac{\cormat_{ij}}{\sqrt{(1-\cormat_{ij}^2)/(n-p)}}
\end{equation}
follows a Student's t-distribution with $n-2-(p-2)=n-p$ degrees of freedom.

Motivated by the above observations, we define the entries of $\nullcormat$ through
\begin{equation}\label{re:nullcormat}
\nullcormat_{ij}:=\nullcormat_{ji}:=
\begin{cases}
    1&\text{~~if~}i=j\\
    \frac{Z_{ij}}{\sqrt{n-p+Z_{ij}^2}}&\text{~~if~}i\neq j
\end{cases},
\end{equation}
where the $Z_{ij}$'s ($i,j\in\{1,\dots,p\}$) are sampled independently from the Student's t-distribution with $n-p$ degrees of freedom. 
These are our knockoff versions of the sample partial correlations:
each element of this matrix mimics sample partial correlations between two conditionally independent nodes.
The diagonal elements of~$\nullcormat$ are set to~1 to equal the diagonal elements of~$\cormat$;
the off-diagonal elements of $\nullcormat$ are in $(-1,1)$.

\textit{Step 2:  Establishing the Test Statistics.}
We now construct the test statistics for the entries of the sample partial correlation matrix~$\cormat$ and its knock-off counterpart~$\nullcormat$.
We first apply elementwise hard-thresholding, which can be written as penalized empirical risk minimization
\begin{equation}\label{re:lasso}
\est(t)\in\argmin_{A\in\mathcal{S}}\bigl\{\,\normtwo{\cormat-A}^2+t^2\,\normzero{A}\bigr\}\,,
\end{equation}
where $t>0$ is the thresholding parameter and $\mathcal{S}$ is the set of symmetric and invertible matrices in $\rpp.$
(Our pipeline also applies to soft-thresholding, which corresponds to the $\ell_0$-term swapped with an $\ell_1$-term, and other estimators, but to avoid digression, we omit the details.) 
The knock-off version of that estimator is 
\begin{equation}\label{re:lasso1}
\estnull(t)\in\argmin_{A\in\mathcal{S}}\bigl\{\,\normtwo{\nullcormat-A}^2+t^2\,\normzero{A}\bigr\}\,.
\end{equation}
We now use those estimators to quantify the signal strengths. 
We define the test statistics matrix~$\statu$ via
\begin{equation}\label{re:edgestats}
    \statu_{ij}:=\sup\Big\{t: \big(\est(t)\big)_{ij}\neq 0\Big\}\,,
\end{equation}
which is the point on the tuning parameter path (ranging from $+\infty$ to~$0$) at which the sample partial correlation between $\rvx_i$ and $\rvx_j$  controlling for other variables first enters the model. 
The test statistic~$\statu_{ij}$ indeed tends to be large if $\cormat_{ij}$ (and, therefore, its underlying population versions $\rho_{ij\cdot \vset\setminus\{i,j\}}$) are large.
Similarly, we can evaluate the signal strength of $\nullcormat_{ij}$ via 
\begin{align}\label{re:edgestatss}
    \statun_{ij}:=\sup\Bigr\{t: \big(\estnull(t)\big)_{ij}\neq 0\Bigr\}\,.
\end{align}
Large values of $\statu_{ij}$ provide evidence against~$\hypo$, while large values of $\statun_{ij}$ provide evidence for $\hypo$;
thus, the larger $\statu_{ij}$ in comparison to $\statun_{ij}$, the more confidently we can reject $\hypo$.

For a detailed assessment of the signal strengths, 
we construct the matrix-valued test statistics  $\statmat\in\rpp$ via
\begin{equation}\label{re:statistics}
\statmat_{ij}:=\statmat_{ji}:=
\begin{cases}
   (\statu_{ij}\vee \statun_{ij})\cdot \sign (\statu_{ij}-\statun_{ij})&\text{~~if~}i\neq j\\
    0&\text{~~if~}i=j
\end{cases}\,.
\end{equation}
The test matrix $\statmat$  depends on $\cormat$ and $\nullcormat$ through $\statu_{ij}$ and $\statun_{ij}$. 
A positive  $\statmat_{ij}$ states that the edge~$(i,j)$ enters the model before its knock-off counterpart;
more generally, the larger $\statmat_{ij}$, 
the more evidence we have against the hypothesis $\hypo$.

\textit{Step 3: Defining a Data-dependent Threshold.}

According to the previous step, large  $\statmat_{ij}$  provide evidence against $\hypo$.
In this step, we quantify this by defining a data-driven threshold~$\thold$ and selecting the edges~$(i,j)$ with $\statmat_{ij}\ge \thold,$ which yields the  estimated edge set $\eeset=\{(i,j)\in\vset\times\vset: \statmat_{ij}\ge \thold\,\}.$
Given a target FDR level~$q$, the threshold is defined as 
\begin{equation}\label{re:threshold}
\thold:=\min\Bigg\{t\in\mathcal{\statmat}:\frac{\#\{ (i,j):\statmat_{ij}\le -t\}}{\#\{ (i,j):\statmat_{ij}\ge t\}\vee 1}\le q\Bigg\}\,,
\end{equation}
where $\mathcal{\statmat}:=\{|\statmat_{ij}|:i,j\in\{1,\dots,p\}\}\setminus\{0\}.$ 
We set $\thold:=\infty$ if the minimum is taken over the empty set.
The minimum is always attained as $\mathcal{\statmat}$ is finite. 

Generally, our thresholding scheme aims at bounding the FDR by bounding an  ``empirical version'' of it.
According to Lemma~\ref{flipsign}  in the supplement, it holds for the statistics matrix $\statmat$ defined in~\eqref{re:statistics}, any edge set~$\eset$ that satisfies $\eset =\esetMod:=\{ (i,j)\in \vset \times \vset:  i\neq j, \,\rvx_i \not\perp \rvx_j\}$, and any threshold $t\geq 0$ that
\begin{align*}
&\#\{(i,j):(i,j)\notin \esetMod , \statmat_{ij}\le -t\}\\
=&_d\#\{(i,j):(i,j)\notin \esetMod, \statmat_{ij}\ge t\}\,,    
\end{align*}
where $=_d$ means equivalence in distribution.
Using this equivalence and that an edge $(i,j)$ is selected if and only if $\statmat_{ij}\ge t,$ we can approximately bound $\operatorname{FDP}(t)$, which we define as the FDP for our pipeline with threshold~$t$ as
\begin{align*}
\fdp(t)
&=\frac{\#\big\{ (i,j):(i,j)\notin \eset, \statmat_{ij}\ge t\big\}}{\#\{ (i,j):\statmat_{ij}\ge t\}\vee 1}\\
&=\frac{\#\big\{ (i,j):(i,j)\notin \esetMod, \statmat_{ij}\ge t\big\}}{\#\{ (i,j):\statmat_{ij}\ge t\}\vee 1}\\
&\approx\frac{\#\big\{ (i,j):(i,j)\notin \esetMod, \statmat_{ij}\le -t\big\}}{\#\{ (i,j):\statmat_{ij}\ge t\}\vee 1}\\
&\le\frac{\#\{ (i,j): \statmat_{ij}\le -t\}}{\#\{ (i,j):\statmat_{ij}\ge t\}\vee 1}
=:\efdp(t)\,.
\end{align*}
We interpret $\efdp(t)$ as an estimate of the FDR.
One can check readily that
\begin{equation*}
\thold=\min\Big\{t\in\mathcal{\statmat}:\efdp(t)\le q\Big\}
\end{equation*}
(and set $\thold:=\infty$ if no such $t$ exists),
which means that our data-driven threshold~$\thold$ controls an empirical version of the FDR.

We show in the next section
that the above scheme provides approximate FDR control. 
If \emph{exact} FDR control is required, 
one can modify the scheme similarly as in  mimic~\cite{Barber2015} by thresholding  more conservatively.
Our corresponding threshold is
\begin{equation}\label{re:threshold+}
\thold_+:=\min\Bigg\{t\in\mathcal{\statmat}:\frac{\#\{ (i,j):\statmat_{ij}\le -t\}+1}{\#\{ (i,j):\statmat_{ij}\ge t\}\vee 1}\le q\Bigg\}\,,
\end{equation}
where again $\mathcal{\statmat}=\{|\statmat_{ij}|:i,j=1,\dots,p\}\setminus\{0\}$ and $\thold_+:=\infty$ if no minimum exists.
The difference to the original threshold~$\thold$ is the additional $+1$ in the numerator, which can make the threshold slightly larger (see Section~\ref{app:intuition} in the supplement  for some intuition).
We call the pipeline of Section~\ref{subsec:method} with $\thold$ replaced by $\thold_+$ the \emph{KO+ scheme}.
In practice, however, we would typically recommend the KO scheme, as it has higher statistical power.

\subsection{Mathematical Support}\label{sec:sim}

We now support our method  mathematically.
We first state the following (all proofs are deferred to the supplementary materials): 
\begin{theorem}[Approximate FDR control]\label{re:thm1}
For any target level $q\in[0,1],$ the KO scheme established in Section~\ref{subsec:method} satisfies
\begin{equation*}
\E \Bigg[\frac{\#\big\{(i,j): (i,j)\notin\esetMod\text{ and }(i,j)\in \widehat{\eset}\,\big\}}{\#\big\{(i,j):(i,j)\in\widehat{\eset}\,\big\}+q^{-1}}\Bigg]\le q\,.
\end{equation*}
\end{theorem}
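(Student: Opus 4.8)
The plan is to adapt the modified-FDR martingale argument of \cite{Barber2015} to the edge statistics $\statmat_{ij}$. Throughout, the ``nulls'' are the non-edges $(i,j)\notin\esetMod$. For a threshold $t\ge 0$ I would write
\[ V^+(t):=\#\{(i,j)\notin\esetMod:\statmat_{ij}\ge t\},\qquad V^-(t):=\#\{(i,j)\notin\esetMod:\statmat_{ij}\le -t\}, \]
and $R(t):=\#\{(i,j):\statmat_{ij}\ge t\}$, so that $\#\eeset=R(\thold)$ and the numerator in the claim equals $V^+(\thold)$. The argument then has two ingredients: an elementary algebraic step that uses the definition of $\thold$ to trade the denominator $R(\thold)+q^{-1}$ for $1+V^-(\thold)$, and a martingale step that controls the resulting ratio in expectation.

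For the algebraic step, if $R(\thold)=0$ then $V^+(\thold)=0$ and the integrand vanishes, so I may assume $R(\thold)\ge 1$. By the definition~\eqref{re:threshold} of $\thold$ together with the identification $\thold=\min\{t\in\mathcal{\statmat}:\efdp(t)\le q\}$, we have $\#\{(i,j):\statmat_{ij}\le -\thold\}\le q\,R(\thold)$. Since $V^-(\thold)\le\#\{(i,j):\statmat_{ij}\le -\thold\}$, this gives $1+V^-(\thold)\le 1+q\,R(\thold)=q\,(R(\thold)+q^{-1})$, and hence
\[ \frac{V^+(\thold)}{R(\thold)+q^{-1}}\ \le\ q\,\frac{V^+(\thold)}{1+V^-(\thold)}. \]
Taking expectations, it remains to prove the single inequality $\E\big[V^+(\thold)/(1+V^-(\thold))\big]\le 1$.

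For the martingale step, the key structural input is the conditional sign-symmetry of the null statistics underlying Lemma~\ref{flipsign}: conditionally on the absolute values $(|\statmat_{ij}|)$ and on the statistics of the true edges, the signs $(\sign\statmat_{ij})$ over the non-edges are i.i.d.\ symmetric. This holds because for a null pair the knock-off score $\statun_{ij}$ and the data score $\statu_{ij}$ are exchangeable, so $\statmat_{ij}=(\statu_{ij}\vee\statun_{ij})\sign(\statu_{ij}-\statun_{ij})$ retains its absolute value while flipping its sign with probability $1/2$, independently across pairs. Given this, I would scan $t$ downward through the finite set $\mathcal{\statmat}$ and let $\mathcal{F}_t$ be generated by all absolute values, all edge statistics, and the signs of the non-edges with $|\statmat_{ij}|\ge t$. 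A direct computation (removing one null at a time as $t$ increases, using exchangeability of the revealed signs) shows that $M(t):=V^+(t)/(1+V^-(t))$ is a martingale for this filtration run backward in $t$; at the first step at most one null is counted, so $M\le 1$ there. Because $\thold$ is determined by $\efdp$, which is $\mathcal{F}_t$-measurable, it is a stopping time for this filtration, and optional stopping yields $\E[M(\thold)]\le 1$, exactly the required inequality.

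The main obstacle is the martingale/exchangeability step rather than the algebra: one must verify that the sign-flip property is available in the strong conditional form needed for optional stopping, not merely as the marginal distributional equality recorded in Lemma~\ref{flipsign}. Concretely, this requires checking that the map $\cormat,\nullcormat\mapsto(\statu_{ij},\statun_{ij})$ is equivariant under swapping $\cormat_{ij}\leftrightarrow\nullcormat_{ij}$ at a null pair---so that exchanging a null entry with its knock-off exchanges the two scores and flips $\statmat_{ij}$---despite the estimators in~\eqref{re:lasso} and~\eqref{re:lasso1} being defined through the matrix-level constraint set $\mathcal{S}$ rather than purely entrywise. Once this equivariance and the resulting conditional symmetry are in place, the two displayed steps combine to give the bound $q$.
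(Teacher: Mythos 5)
Your proposal is correct and follows essentially the same route as the paper: the algebraic reduction from $V^+(\thold)/(R(\thold)+q^{-1})$ to $q\cdot V^+(\thold)/(1+V^-(\thold))$ via the definition of $\thold$ is exactly the paper's first step, and the remaining bound $\E[V^+(\thold)/(1+V^-(\thold))]\le 1$ is what the paper obtains by recasting the scheme as a selective sequential test with synthetic p-values ($p_l=1/2$ or $1$ according to $\sign\statmat^l$) and citing the martingale inequality of \cite{Barber2015}, which is the same optional-stopping argument you sketch directly. The conditional, joint i.i.d.\ form of the sign-flip property that you correctly flag as the delicate point is also asserted rather than fully verified in the paper (Lemma~\ref{flipsign} is stated only marginally), so your proposal matches the paper's level of rigor there as well.
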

\noindent  
This bound establishes an FDR-type guarantee.
The left-hand side differs from the FDR in~\eqref{re:fdr} and~\eqref{re:fdp} in two aspects, though: 
First, it contains an additional~$q^{-1}$ in the denominator.
But this difference is negligible unless  the number of selected edges is very small,
and it can even be removed by applying a more conservative threshold (see supplementary materials).
Second, it contains~$\esetMod$ rather than~$\eset$,
 that is, 
 it concerns correlations rather than partial correlations.
 Since~$\esetMod$ can be considerably larger than~$\eset$,
 this means that the theorem cannot guarantee FDR control in general.
 But still,
 it can serve as a first mathematical witness for the potency of our approach.
 
 And this potency is confirmed in simulations indeed.
The simulation setup is the one of Section~\ref{sec:motivation}.
In addition,
the number of samples~$n$ and the number of parameters~$p$ is varied,
and our KO method is evaluated on a fine grid of target FDR levels.
Recall that the setup involves a band graph,
where $\esetMod\gg\eset$.
Hence, in view of the above theory (which does not apply to such cases),
good results in this setup would give a particularly strong argument for our method.

In addition, we can also guarantee exact FDR control for the KO+ scheme: 
\begin{theorem}[Exact FDR Control]\label{re:thm2}
For any target level $q\in[0,1],$ and $\eset=\eset'$, the KO+ scheme satisfies
\begin{equation*}
    \operatorname{FDR}=\E \Bigg[\frac{\#\big\{(i,j): (i,j)\notin\eset \text{ and }(i,j)\in \widehat{\eset}\,\big\}}{\#\big\{(i,j):(i,j)\in\widehat{\eset}\,\big\}\vee 1}\Bigg]\le q\,.
\end{equation*}
\end{theorem}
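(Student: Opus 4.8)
The plan is to adapt the knockoff$+$ argument of \cite{Barber2015} to the present graphical setting, using the hypothesis $\eset=\esetMod$ so that the null edges are exactly $\{(i,j):(i,j)\notin\eset\}=\{(i,j):(i,j)\notin\esetMod\}$. Writing $\thold_+$ for the conservative threshold in~\eqref{re:threshold+} and abbreviating, for a threshold $t\geq 0$,
\begin{align*}
V^+(t)&:=\#\{(i,j):(i,j)\notin\esetMod,\ \statmat_{ij}\ge t\}\,,\\
V^-(t)&:=\#\{(i,j):(i,j)\notin\esetMod,\ \statmat_{ij}\le -t\}\,,
\end{align*}
the false discoveries of the procedure are exactly $V^+(\thold_+)$. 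The target inequality then splits into a deterministic step, which bounds $\fdp(\thold_+)$ by $q\,V^+(\thold_+)/(1+V^-(\thold_+))$, and a probabilistic step, which bounds the expectation of this ratio by $1$.

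For the deterministic step I would invoke the definition of $\thold_+$ directly. Because $\thold_+$ is the smallest element of $\mathcal{\statmat}$ for which the ratio in~\eqref{re:threshold+} is at most $q$, we have $\#\{(i,j):\statmat_{ij}\ge\thold_+\}\vee1\ge q^{-1}\bigl(1+\#\{(i,j):\statmat_{ij}\le-\thold_+\}\bigr)$, and since the null negatives form a subset of all negatives we also have $1+\#\{(i,j):\statmat_{ij}\le-\thold_+\}\ge 1+V^-(\thold_+)$. Combining these two facts gives
\begin{equation*}
\fdp(\thold_+)=\frac{V^+(\thold_+)}{\#\{(i,j):\statmat_{ij}\ge\thold_+\}\vee1}\le q\,\frac{V^+(\thold_+)}{1+V^-(\thold_+)}\,,
\end{equation*}
so that taking expectations reduces the theorem to showing $\E\bigl[V^+(\thold_+)/(1+V^-(\thold_+))\bigr]\le1$.

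For the probabilistic step I would strengthen the distributional identity of Lemma~\ref{flipsign} to the statement that, conditionally on the magnitudes $\{|\statmat_{ij}|\}$ and on all information about the non-null edges, the signs $\{\sign(\statmat_{ij}):(i,j)\notin\esetMod\}$ are jointly independent and symmetric. This is natural in the construction: elementwise thresholding reduces $\statu_{ij}$ and $\statun_{ij}$ to the magnitudes $|\cormat_{ij}|$ and $|\nullcormat_{ij}|$, so for a null edge $\statmat_{ij}$ compares $|\cormat_{ij}|$ with $|\nullcormat_{ij}|$; under the null these two quantities share the same $t_{n-p}$-based law, and the knockoff entry $\nullcormat_{ij}$ is drawn freshly and independently across edges, so $(\cormat_{ij},\nullcormat_{ij})$ is exchangeable and $\sign(\statmat_{ij})$ is an independent fair coin given $|\statmat_{ij}|$. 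Granting this sign-flip property, I would order the null statistics by decreasing magnitude and index the process by the number of null statistics with $|\statmat_{ij}|\ge t$; then $V^+(t)/(1+V^-(t))$ is a martingale for the associated backward filtration and $\thold_+$ is a stopping time for it. Optional stopping bounds the expectation by its value when all null statistics are present, namely $\E[A/(1+B)]$ with $A+B=\#\{(i,j):(i,j)\notin\esetMod\}$ and $A$ binomial with parameter $1/2$, and the classical identity $\E[A/(1+B)]\le1$ closes the argument.

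The hard part will be the probabilistic step, and specifically upgrading Lemma~\ref{flipsign} from equality in distribution of the two null counts to the full joint independent sign-flip property, since the sample partial correlations $\cormat_{ij}$ are dependent across edges. The leverage comes from the independently sampled knockoff entries $\nullcormat_{ij}$ together with the reduction of the test statistics to a comparison of magnitudes, which should render the conditional sign of each null $\statmat_{ij}$ symmetric and independent across edges; a secondary point requiring care is verifying that $\thold_+$ is genuinely a stopping time for the backward martingale.
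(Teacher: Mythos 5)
Your proposal is correct in outline and follows essentially the same route as the paper: the paper recasts KO+ as a ``selective sequential hypothesis testing'' problem with p-values $p_l\in\{1/2,1\}$ ordered by $|\statmat^k|$, but its Lemma~\ref{re:seqtesting} is exactly your deterministic bound $\fdp(\thold_+)\le q\,V^+(\thold_+)/(1+V^-(\thold_+))$ followed by the Barber--Cand\`es martingale inequality $\E[A/(1+B)]\le 1$. The step you flag as the hard part---upgrading the marginal sign-flip of Lemma~\ref{flipsign} to joint independence of the null signs given the magnitudes and the non-null edges---is precisely the hypothesis that Lemma~\ref{re:seqtesting} requires (null p-values i.i.d.\ and independent of the non-null p-values), and the paper asserts it follows ``by Lemma~\ref{flipsign}'' without further argument, so neither your sketch nor the paper's proof rigorously closes that gap (the dependence of the $\cormat_{ij}$ across edges is the genuine obstruction you correctly identify).
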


\subsection{Numerical Support}\label{sec:sim}
We now demonstrate the KO's accuracy numerically. 
We show in particular that it achieves the target FDR levels and has favorable power curves

Note first that KO is easy to implement and fast to compute: 
in particular, it does not require any descent algorithm---similarly as CT and PT but in contrast to GLASSO and MB.

The results are displayed in  Figures~\ref{fig:ttfdr} and~\ref{fig:pwrfdr}. 
In the first figure, the observations are essentially always on or below the diagonal, which  demonstrates that KO provides valid
FDR control.
For GLASSO, MB(or), MB(and), CT, and PT,
in contrast, it is unclear how to calibrate the tuning parameters for such a control.
In the second figure,
the KO-curves are essentially always on or above the curves of the competing methods, which demonstrates that KO provides comparable or more power than the other methods for given FDR level. 
Overall, KO has an attractive FDR-power dependence and achieves the nominal FDR level. 

\vspace{-0.2cm}
\begin{figure}
\centering
\includegraphics[width=60mm]{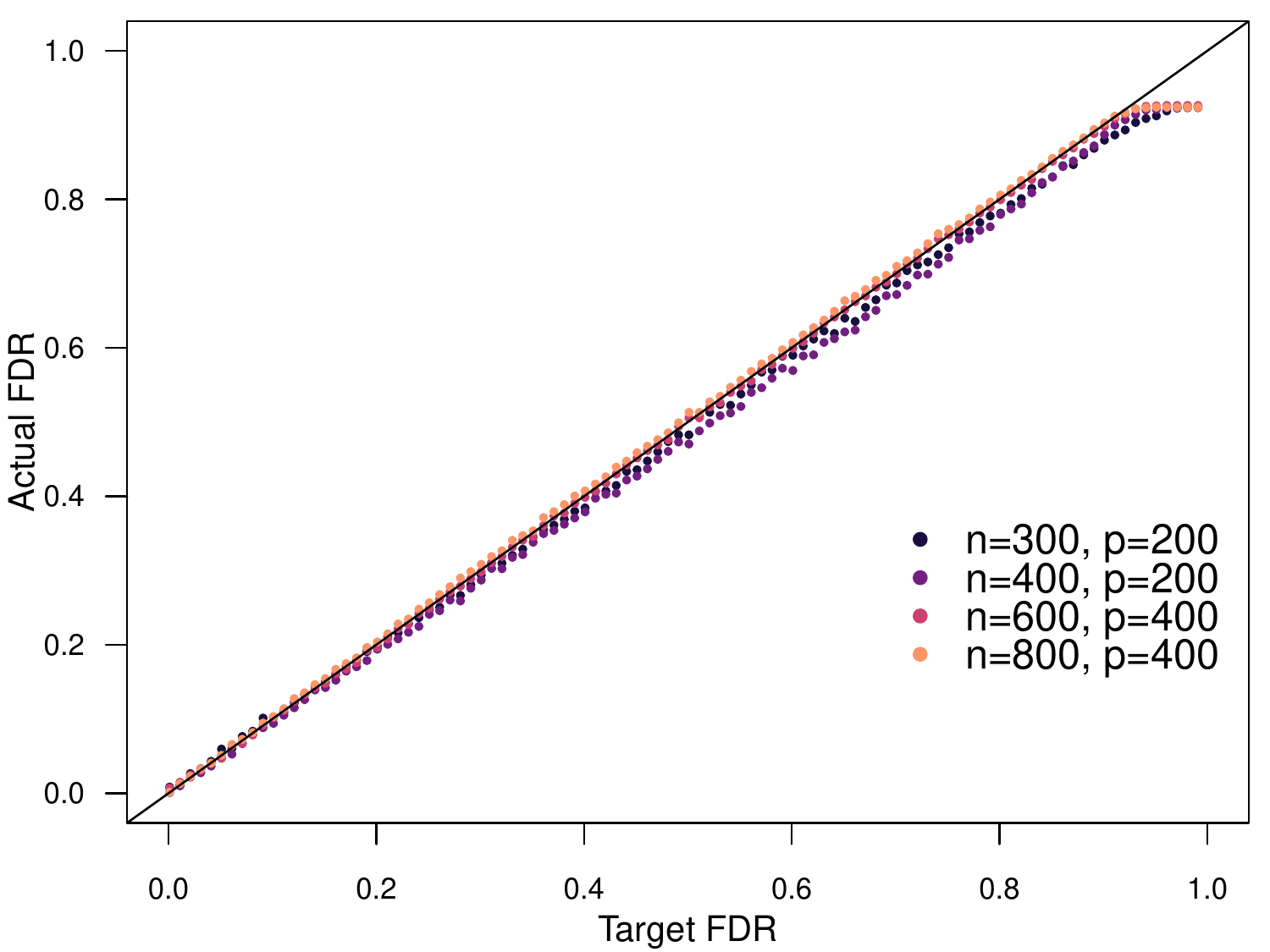}
\vspace{-0.2cm}
\caption{ Actual FDR versus target FDR for KO. 
The curves are basically always on or below the diagonal, meaning that KO provides valid FDR control across all the settings.
}
\label{fig:ttfdr}
\end{figure}

\vspace{-0.2cm}
\begin{figure}[h]
\centering
\includegraphics[width=88mm]{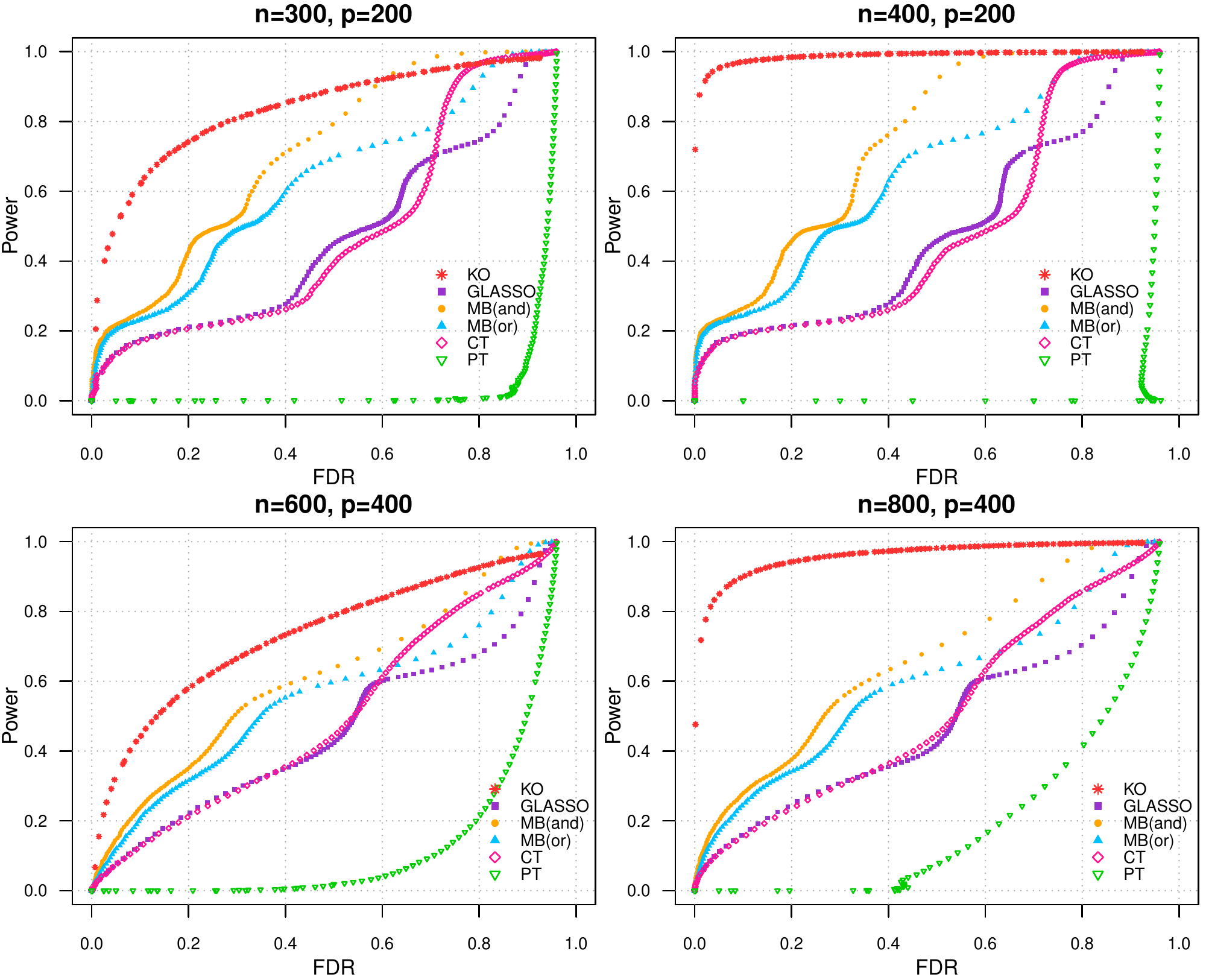}
\vspace{-0.4cm}
\caption{
FDR and power for KO, GLASSO, MB(and), MB(or), CT, and PT as functions  of  the  tuning  parameters.  
Across all settings, 
KO outmatches the other methods in terms of power for given FDR.}
\label{fig:pwrfdr}
\end{figure}

\section{Real Data Analyses }\label{sec:realdata}

We now demonstrate the 
utility of our proposed knock-off method in uncovering biological networks.
We give three examples: brain connectivity networks, microbiological networks in the human gut, and
abundance networks of amphibians. 
The target FDR level is set to~0.2 across all analyses.

\subsection{Brain Connectivity Analysis}

Functional Magnetic Resonance Imaging (fMRI) is a powerful tool to unveil the brain's functional interdependence structures.
The data at hand, described and analyzed in~\cite{Bu2017}, consists of resting-state fMRI acquired at the Department of Neurology at Beijing Hospital from April 2012 through December 2013.
The data set comprises $n = 210$  samples  of the average voxel intensities in $p = 116$ anatomical volumes in $n_{\operatorname{NC}}=10$ individuals with normal cognition.
In line with earlier work~\citep{Horwitz1987,Huang2010},
we restrict our focus to 42 anatomical volumes, further referred to as regions of interest (ROI).
The 42~ROIs are located in the frontal lobe, parietal lobe, occipital lobe, and temporal lobe.

Since we have the data of~$n_{\operatorname{NC}}=10$ subjects, 
we can complement our pipeline with the multiple FDR scheme introduced in~\citep{Xie19} with target FDR level $0.2\times 0.5^{k}$ for the $k$-th individual, $k\in\{1,\dots,10\}.$
We then obtain the continuous graph estimates~$\est_{ij}(\hat t)$ for each individual~$k,$ which is denoted by $\est_{ij}^k(\hat t).$
Then, we calculate the scaled cumulative signal strengths as $\sum_{k\in\text{group}}|\est^k_{ij}(\hat t)|/\max_{l,m}\big\{\sum_{k\in\text{group}}|\est_{lm}^k(\hat t)|\big\}$.

The scaled cumulative signal strengths are displayed in Figure~\ref{fig:42roi}. 
The plot demonstrates that strong connections are predominately between the left and right counterparts of a given region,
which is in line with earlier work on the functional network architecture of the brain~\citep{Honey2035}.

\subsection{Human Microbiome Analysis}\label{sec:HumanMicrobiome}

We now apply the knock-off method to the human microbiome data set of the American Gut Project (\url{http://humanfoodproject.com/americangut/}).
Our specific goal is to learn how the  microbiome is associated with smoking. We use the processed data  that were collected before December, 2018. 
We classify the individuals with smoking frequencies \texttt{Daily}, \texttt{Occasionally(1-2 times/week)}, 
\texttt{Regularly(3-5 times/week)}, and \texttt{Rarely(a few times/month)}  as smokers and the ones with smoking frequency \texttt{Never} as non-smokers.
This yields  $n_{\text{smoker}}=1234$ smokers and $n_{\text{non-smoker}}=15\,640$ non-smokers. 
We incorporate the centered log-ratio transformed~\citep{Aitchison1982}  abundances of the $p=32$ phyla that appear in at least  5\% of the individuals.

To reduce the influence of the imbalanced samples sizes,
we again add the multiple FDR scheme of~\citep{Xie19} to our  method.
Specifically, we uniformly subsample  $n_{\text{sample}}=1234$ individuals from the non-smoker group 10 times.
At each time $k\in\{1,\dots,10\}$, we apply the knock-off method to the corresponding $n_{\text{sample}}\times p$-dimensional data set with  target FDR level $0.2\times 0.5^k$.
Finally, we calculate the scaled cumulative signal strengths, as showed in Figure~\ref{fig:ag}.
The smoker group's data is treated with the vanilla version of our scheme from Section~\ref{subsec:method}. 

We find strong evidence that the graphs of the smokers and non-smokers  differ in their connectivities:
the p-value of 
a corresponding Wilcoxon signed-rank test  is $\ll 10^{-10}$.
In fact---see also the visualization in %
Figure~\ref{fig:AGstats} 
---we find that there are much more interactions in the non-smokers' guts,
which is in agreement with findings in the literature~\citep{Biedermann2013,Savin2018, Stewart2018}.

\subsection{Atlantic Amphibians Abundance  Analysis}\label{sec:AtlanticAmphibians}
We finally analyze abundance data from the Atlantic Forest Biome in South America~\citep{Atlantic2018}.
We specifically consider the $p=30$  most abundant endemic (occurring uniquely in Atlantic Forest) and $p=30$ most abundant non-endemic species  of the order Anura. This ensures that the species appear in at least  0.9\% of the observations.

The corresponding number of study sites for which species abundances are fully documented is $n=346.$
Again, we apply the centered log-ratio transformation to the data.
We find strong evidence for differences in the connectivities of the graphs of the two groups:
the p-value of a corresponding 
Wilcoxon signed-rank test is  $\ll 10^{-5}$.
There are more interactions between the endemic species than between the non-endemic species,
that is, abundances of endemic species are more interconnected among the different species.
See also Figure~\ref{fig:amphibian}, 
which visualizes the scaled connectivity estimates~$|\est_{ij}(\hat t)|/\max_{l,m}\bigl\{|\est_{lm}(\hat t)|\big\}$  from our pipeline.
Since the total number of endemic and non-endemic species is comparable, we hypothesize that this difference is due to a higher level of adaptation of endemic species.
This is in line with with \cite{gorman2014shifts},
which indicates  that  endemic plants have an increased level of adaptation.
However, to the best of our knowledge, our result is the first rigorous quantitative formulation of such a difference between  endemic and non-endemic species.

\section{Discussion}\label{sec:discussion}

We have shown that our  KO pipeline provides effective FDR control and that it can provide new insights into  biological networks.

A topic for further research is the theory:
We provide first theoretical insights in Section~\ref{sec:sim},
but current theorems do yet not establish exact FDR control in general.
However, our numerical results suggest that our theory can be sharpened accordingly.
(In a paper that appeared after ours,
\citep{li2019nodewise} were able to establish a general theory,
but their method is different and computationally much more demanding.)

Another topic for further research are extensions to $p>n$ along the lines of~\citep{Candes2016}.
Our methodology applies very generally otherwise;
in particular, it  applies to arbitrary covariance matrices~$\Sigma$ and asymptotically even to non-Gaussian data.

In summary,
its simplicity and convincing  performance make our pipeline useful for a wide range of applications.

\section{Acknowledgements}

We sincerely thank Jinzhou Li for his insightful comments and the inspiring discussions.

\begin{figure}[h]
\hspace{-0.4cm}
\centering
\includegraphics[width=100mm, height=63mm]{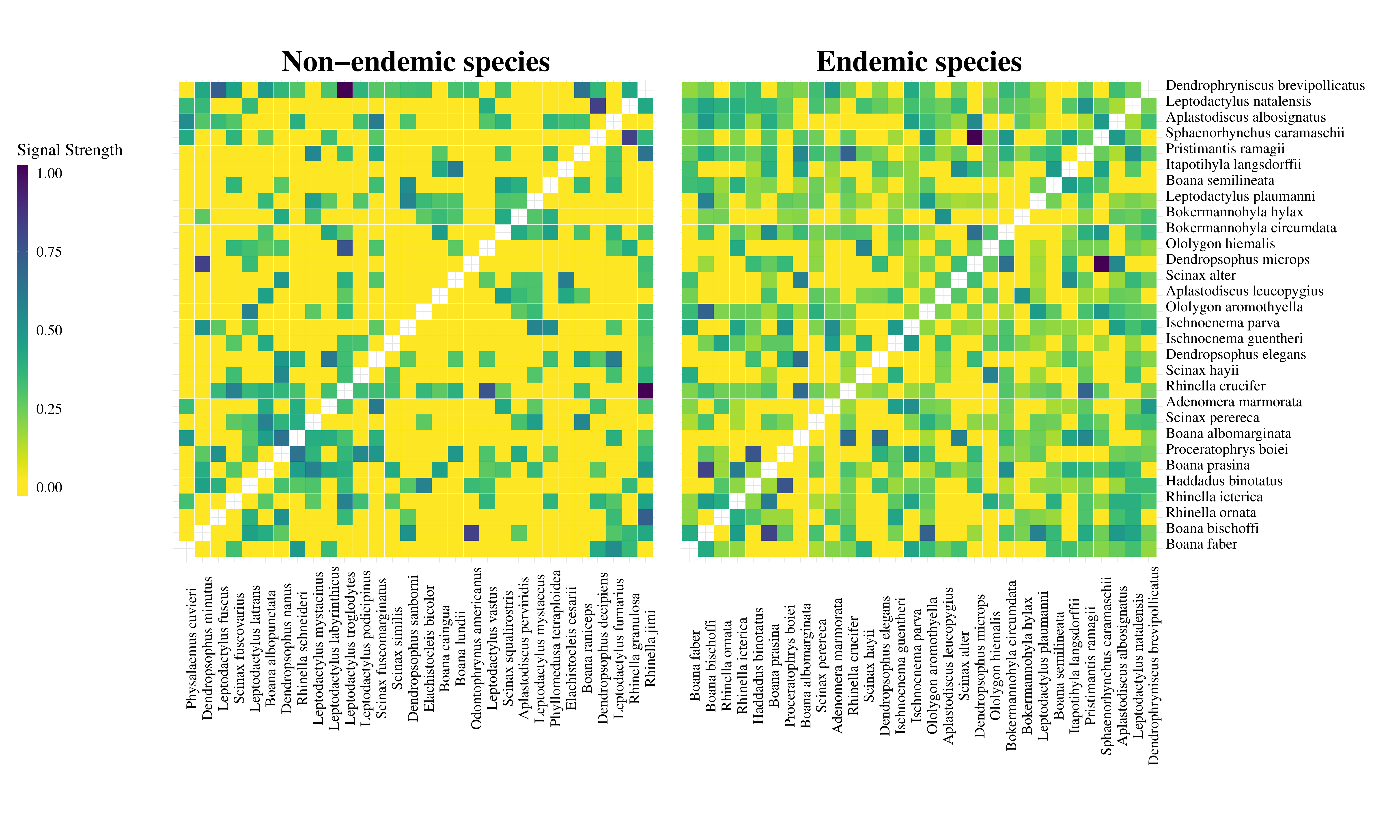}
\vspace{-5mm}
\caption{Signal  strengths for endemic species and non-endemic species in the Atlantic Forest Biome. 
The difference between the two plots in their numbers of gray cells indicates that there are more  connections among endemic species than among non-endemic species.}
\label{fig:amphibian}
\end{figure}

\begin{figure}[h]
\centering
\includegraphics[width=0.5\textwidth]{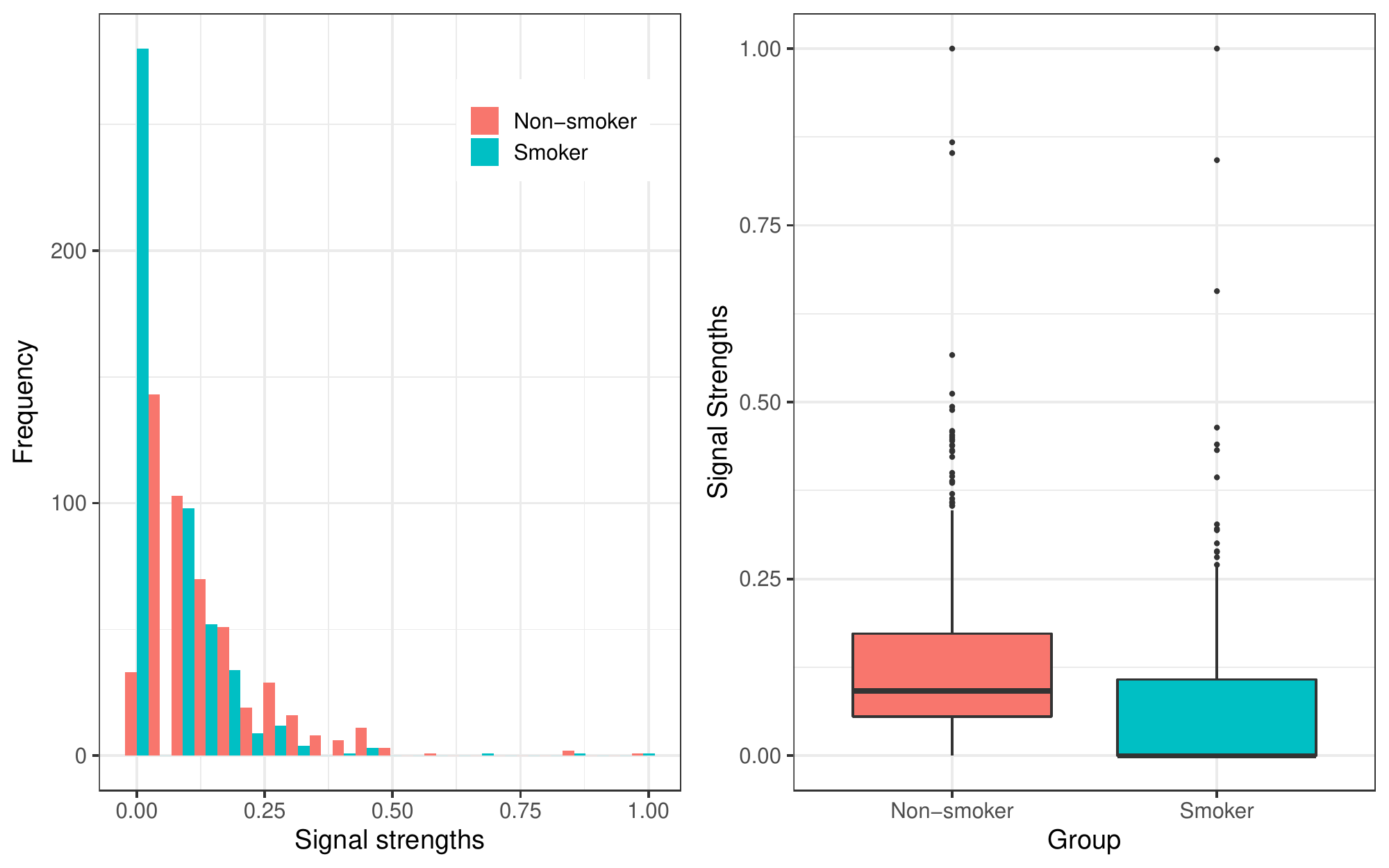}
\caption{Histograms and boxplots of the  signal strengths for the smokers and non-smokers. 
The graphs confirm that the non-smokers' microbiome is more connected than the smokers' microbiome.}
\label{fig:AGstats}
\end{figure}

\vspace{5cm}

\newpage

\onecolumn

\begin{figure}[h]
\centering
\includegraphics[width=120mm]{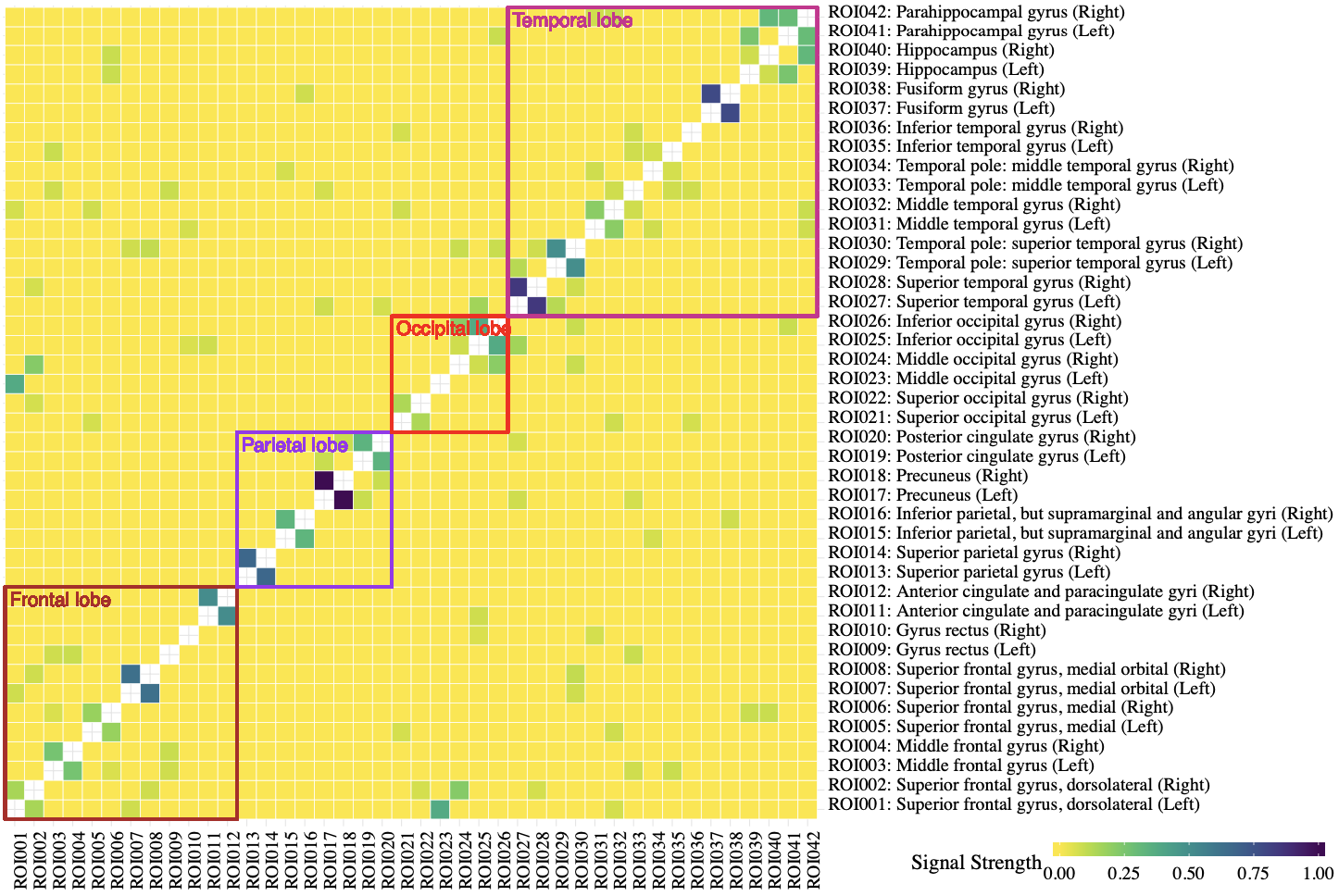}
\vspace{-2mm}
\caption{Cumulative signal strength across $n_{\operatorname{NC}}=10$ individuals for connections among the 42~ROIs.
The four red squares highlight the intra-lobe connections. 
The graph shows that strong connections are most common between  regional counterparts in the left and right hemisphere.}
\label{fig:42roi}
\end{figure}

\vspace{1cm}

\begin{figure}[h]
\centering
\includegraphics[width=125mm,height=77mm]{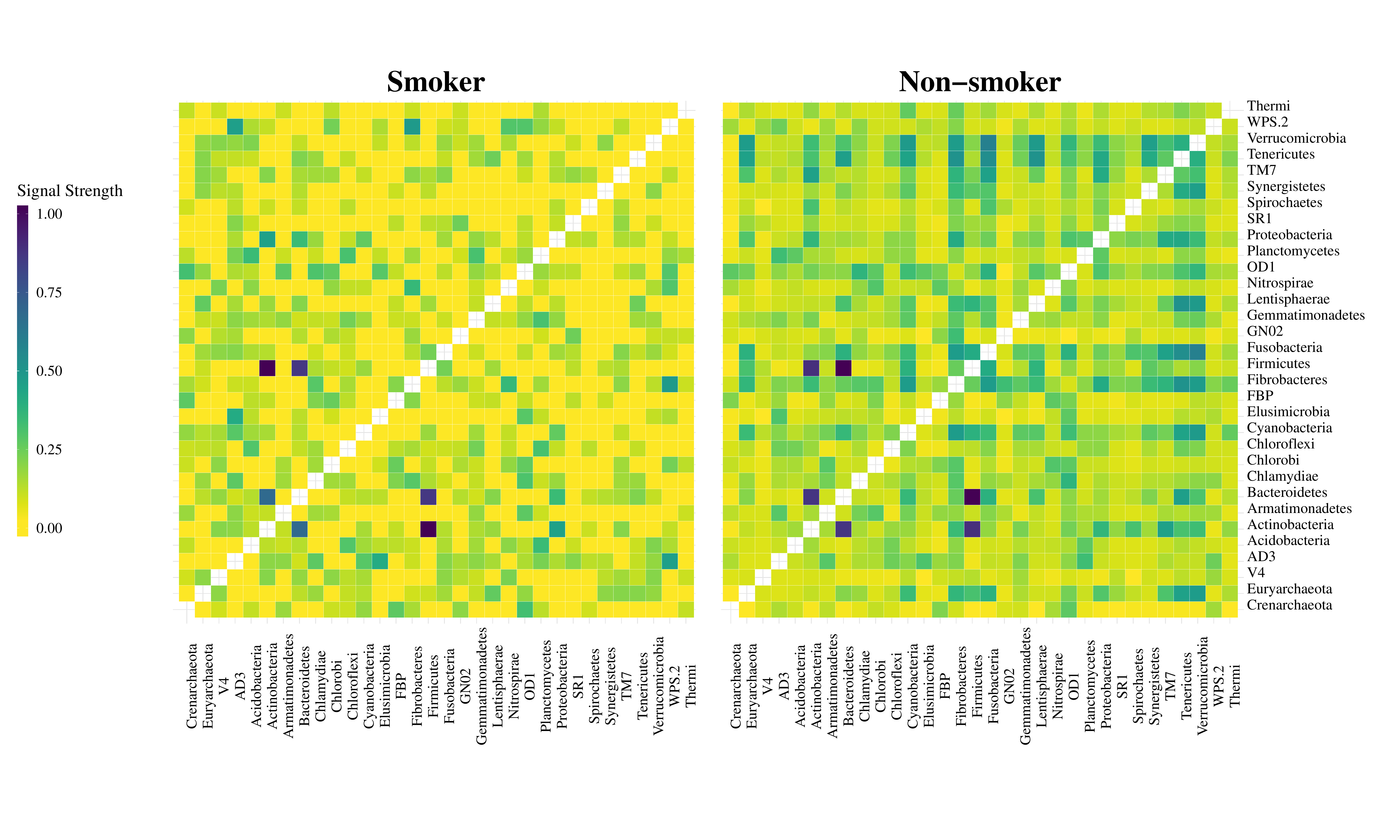}
\vspace{-10mm}
\caption{Cumulative signal  strengths for smoker group and cumulative signal strengths for non-smoker group. 
The graphs show there are more connections among the gut microbiome for non-smokers than for smokers.}
\label{fig:ag}
\end{figure}
\twocolumn

\bibliography{References.bib}
\clearpage
\onecolumn

\appendix\label{app}
\section{Further Intuition}\label{app:intuition}

 To motivate this additional ``+1'' in the KO+ scheme, we consider the FDP for the KO+ pipeline with threshold $\hat t_+$ defined in~\eqref{re:threshold+}:
\begin{align*}
\fdp(\hat t_+)=\,&\,\frac{\#\big\{ (i,j):(i,j)\notin \eset, \statmat_{ij}\ge \hat{t}_+\big\}}{\#\{ (i,j):\statmat_{ij}\ge \hat{t}_+\}\vee 1}\\
\le\, &\, \frac{\#\{ (i,j):(i,j)\notin \eset , \statmat_{ij}\ge \hat{t}_+\}}{1+\#\{ (i,j):(i,j)\notin \eset ,\statmat_{ij}\le -\hat{t}_+\}}\\
&~~~\cdot 
\frac{1+\#\{ (i,j): \statmat_{ij}\le -\hat{t}_+\}}{\#\{ (i,j):\statmat_{ij}\ge \hat{t}_+\}\vee 1}\\
\le\, &\, \frac{\#\{ (i,j):(i,j)\notin \eset , \statmat_{ij}\ge \hat{t}_+\}}{1+\#\{ (i,j):(i,j)\notin \eset ,\statmat_{ij}\le -\hat{t}_+\}}\cdot q\,.
\end{align*}
The first inequality follows from
\begin{equation*}
 \#\{ (i,j):(i,j)\notin \eset , \statmat_{ij}\ge \hat{t}_+\} \le \#\{ (i,j): \statmat_{ij}\ge \hat{t}_+\}\,,
\end{equation*}
and the second inequality follows from the definition of~$\hat{t}_+$.
Using martingale theory, we prove in Appendix~\ref{app:proofs} that 
\begin{align*}
\E\Biggl[\frac{\#\{ (i,j):(i,j)\notin \esetMod, \statmat_{ij}\ge \hat{t}_+\}}{1+\#\{ (i,j):(i,j)\notin \esetMod ,\statmat_{ij}\le -\hat{t}_+\}}\Biggr]\le 1\,.
\end{align*}

\section{Proofs}\label{app:proofs}
The agenda of this section is to establish proofs for Theorems~\ref{re:thm1} and~\ref{re:thm2}.
For this, we define the notion of swapping and study the matrix-valued test statistic~$\statmat\in\rpp.$ 
We write $\statmat$ as $\statmat(\cormat,\nullcormat)$ to emphasize that $\statmat$ is a function of $\cormat$ and $\nullcormat.$

The basis for the proofs is the idea of swapping.
\newcommand{\swapop}{\operatorname{Sub}}
\begin{definition}[Swapping]
Given an edge set $\Generaledge\subset\vset\times\vset$ and a matrix~$M\in\rpp$, we define the substitution operator $\swapop_{\Generaledge,M}:\rpp\to\rpp$ as
\begin{align*}
    A&\mapsto\swapop_{\Generaledge,M}(A):=\begin{cases}
     M_{ij}~~~&\text{if}~(i,j)\in\Generaledge\\
     A_{ij}~~~&\text{if}~(i,j)\notin\Generaledge\,.
    \end{cases}
\end{align*}
We then define the corresponding swapped test matrix as
\begin{equation*}
    \statmat_{\Generaledge}:=\statmat\bigl(\swapop_{\Generaledge,\nullcormat}(\cormat),\swapop_{\Generaledge,\cormat}(\nullcormat)\bigr)\,.
\end{equation*}
\end{definition}
\noindent Given an edge set~$\Generaledge$ and a matrix~$M$, the operator $\swapop_{\Generaledge,M}(A)$ substitutes the elements of~$A$ that have indexes in  $\Generaledge$ by the corresponding elements of~$M$.
Hence,  as compared to the original test matrix~$\statmat$, the new test matrix~$\statmat_{\Generaledge}\equiv\statmat_{\Generaledge}(\cormat,\nullcormat)$ has the entries of~$\cormat$ and~$\nullcormat$ that have indexes in~$\Generaledge$ swapped.
We will see that 
the elements of $\statmat$ and $\statmat_{\Generaledge}$ that correspond to a zero-valued edge have the same distribution, 
while the distributions of other elements can differ.
This gives us leverage for assessing the number of zero-valued edges in a given set~$\Generaledge$.

The swapped test statistics still has an explicit formulation. 
By definition of the original test matrix in~\eqref{re:statistics}, we find
\begin{equation}\label{re:substitution}
 (\statmat_{\Generaledge})_{ij}=
 \begin{cases}
 (\statun_{ij}\vee \statu_{ij})\cdot \sign (\statun_{ij}-\statu_{ij})~&\text{if}~(i,j)\in\Generaledge\\
  (\statu_{ij}\vee \statun_{ij})\cdot \sign (\statu_{ij}-\statun_{ij})~&\text{if}~i\neq j \text{ and }(i,j)\notin\Generaledge\\
  0 ~~~~~~~~~~~~~~~~~~~~~~~~~~~~~~~&\text{if } i = j\,.
\end{cases}
\end{equation}
This means that $\statmat_{\Generaledge}$ is an ``antisymmetric'' version of $\statmat$:
\begin{lemma}[Antisymmetry]\label{antisymmetry}
For every edge set~$\Generaledge\subset\{(k,l)\in\vset\times\vset:k\neq l\}$, it holds that
\begin{align*}
 (\statmat_{\Generaledge})_{ij}=\statmat_{ij}\cdot
 \begin{cases}
 +1& (i,j)\notin\Generaledge \\ 
 -1& (i,j)\in\Generaledge\,.
\end{cases}   
\end{align*}
\end{lemma}
\noindent Hence, swapping  two entries~$\cormat_{ij}, \nullcormat_{ij}$ effects in switching  signs in~$\statmat_{ij}.$

\begin{proof}[Proof of Lemma~\ref{antisymmetry}]
This follows directly from comparing Displays~\eqref{re:statistics} and~\eqref{re:substitution}.
\end{proof}

Now, we show that the coordinates of~$\statmat$ and~$\statmat_{\Generaledge}$ that correspond to a zero-valued edge are equal in distribution.

\begin{lemma}[Exchangeability]\label{exchangeability}
For every zero-valued edge~$(i,j)\in\{(k,l)\in\vset\times\vset:k\neq l,\covmat^{-1}_{kl}=0,x_k\perp x_l\}$, 
it holds that 
\begin{equation*}
(\statmat_{\Generaledge})_{ij}=_d \statmat_{ij}\,,
\end{equation*}
where~$\Generaledge\subset\{(k,l)\in\vset\times\vset:k\neq l\}$ is an arbitrary set of edges and $=_d $ means equality in distribution.
\end{lemma}

\begin{proof}[Proof of Lemma~\ref{exchangeability}]
Our construction of the knock-offs~in \eqref{re:nullcormat} ensures that the sample partial correlation of a zero-valued edge $(i,j)$ and the corresponding knock-off version have the same distribution:
 $\nullcormat_{ij}=_d \cormat_{ij}$.
 This equality in distribution remains true under elementwise thresholding, 
 so that also the corresponding elements of $\statu$ and $\statun$  in~\eqref{re:edgestats} and~\eqref{re:edgestatss}, respectively, are equal in distribution: 
$\statu_{ij}=_d\statun_{ij}$.
This implies that $\sign(\statu_{ij}-\statun_{ij})=_{d}\sign(\statun_{ij}-\statu_{ij})$ (and $\statu_{ij}\vee\statun_{ij}=\statun_{ij}\vee\statu_{ij}$ anyway).
Hence, in view of the definitions of the  test statistics $\statmat$ and $\statmat_{\Generaledge}$   in \eqref{re:statistics} and~\eqref{re:substitution}, respectively,
we find
$(\statmat_{\Generaledge})_{ij}=_{d}\statmat_{ij}$,
as desired.
\end{proof}

We are now ready to discuss the signs of $\statmat_{ij}$.
The below result will be used in the 
proofs of Theorems~\ref{re:thm1} and~\ref{re:thm2}.

\begin{lemma}[Sign-Flip]\label{flipsign}
 For every zero-valued edge~$(i,j)\in\{(k,l)\in\vset\times\vset:k\neq l,\covmat^{-1}_{kl}=0,x_k\perp  x_l\},$
 it holds that
 \begin{equation*}
\statmat_{ij}=_d  -\statmat_{ij}\,.
\end{equation*}
\end{lemma}
\noindent This lemma justifies our previous statement that
\begin{equation*}
    \#\Bigr\{(i,j):(i,j)\notin\esetMod, \statmat_{ij}\le -t\Bigr\}\\
=_d\#\Bigr\{(i,j):(i,j)\notin\esetMod, \statmat_{ij}\ge t\Bigr\}\,.
\end{equation*} 
\begin{proof}[Proof of Lemma~\ref{flipsign}]
Define~\Generaledge\ as the set that only contains the zero-valued edge in question:  $\Generaledge:=\{(i,j)\}$.
 Lemma~\ref{antisymmetry} then yields
\begin{equation*}
    \bigr(\statmat_{\Generaledge}\bigr)_{ij}=\statmat_{ij}\cdot (-1)\,, 
\end{equation*}
while Lemma~\ref{exchangeability} yields
\begin{equation*}
(\statmat_{\Generaledge})_{ij}=_d \statmat_{ij}\,.
\end{equation*}
Combining these two identities concludes the proof.\\
\end{proof}

We now prove Theorems~\ref{re:thm1} and~\ref{re:thm2}.
For this, we start with two sequential hypothesis testing procedures, together with the theoretical results for FDR control. 
Then, we relate these two procedures to KO and KO+ to prove Theorems~\ref{re:thm1} and~\ref{re:thm2}.

We first introduce the two selective sequential hypothesis testing procedures.
Consider a sequence of null hypotheses~$\h_1,\dots,\h_N$ and corresponding ``p-values''~$p_1,\dots,p_N$.
The values $p_1,\dots,p_N$ are not necessarily p-values in a traditional sense, but we will still refer them like that, because they play the same role as p-values here;
in particular, they will need to stochastically dominate a standard uniform random variable, that is, $\Pr (p_l\le u)\le u$ for any $u\in[0,1]$,
which is a typical assumption on traditional p-values---see~\cite[Page 1828]{Ferreira2006}.

We say that a p-value~$p_l$ is a null p-value if the null hypothesis~$\h_l$ is true, and we say $p_l$ is a non-null p-value if $\h_l$ is false with $l\in\{1,\dots,N\}$.

\textit{Selective Sequential Hypothesis Testing~I:}
For the threshold value $1/2$ and any subset $\K\subset\seq{N},$ define
\begin{equation}\label{re:general_thold}
    \hat k := \max \Biggl\{ k\in \K:\frac{\#\bigl\{l\in\seq{k}:p_l> 1/2\bigr\}}{\#\bigl\{l\in\seq{k}:p_l\le 1/2\bigl\}\vee 1}\le  q  \Biggr\}\,.
\end{equation}
Set $\hat k:=0$ if the above set is empty. 
We reject $\h_k$ for all $k\le \hat k$ with $p_k\le 1/2.$
We will see that this procedure achieves the approximate FDR control. Moreover, the KO scheme can be framed as this procedure.

\textit{Selective Sequential Hypothesis Testing~II:}
For the threshold value $1/2$ and any subset $\K\subset\{1,2,\dots, N\},$ define
\begin{equation}\label{re:general_thold+}
    \hat k_+ :=
 \max \Bigg\{ k\in \K:\frac{1+\#\bigl\{l\in\seq{k}:p_l> 1/2\bigr\}}{\#\bigl\{l\in\seq{k}:p_l\le 1/2\bigr\}\vee 1}\le q  \Bigg\}\,.
\end{equation}
Set $\hat k_+:=0$ if the corresponding set is empty.
We reject $\h_k$ for all $k\le \hat k_+$ with $p_k\le 1/2.$
We will also see that this procedure achieves the exact FDR control. Moreover, the KO+ scheme can be cast as this procedure.

Our next result guarantees  FDR control over the Selective  Sequential  Hypothesis  Testing~I and~II.

\begin{lemma}[FDR Control Over the Hypothesis  Testing I  and II]\label{re:seqtesting}
Consider the two selective sequential procedures described above, 
and suppose that the null p-values are i.i.d.,  satisfy $\Pr (p_l\le u)\le u$  for any $u\in[0,1]$, and are independent from the non-null p-values. 
Let $V, V_+$ be the numbers of false discoveries of the two procedures, that is,
\begin{align*}
V:=&\#\bigl\{l\in\seq{\hat k}: p_l\text{~is null and~}p_l\le 1/2\bigr\}\\
V_+:=&\#\bigl\{l\in\seq{\hat k_+}: p_l\text{~is null and~}p_l\le 1/2\bigr\}\,,
\end{align*}
and $R,R_+$ be the total number of discoveries of the two procedures, that is,
\begin{align*}
R:=&\#\bigl\{l\in\seq{\hat k}: p_l\le 1/2\bigr\}\\
R_+:=&\#\bigl\{l\in\seq{\hat k_+}: p_l\le 1/2\bigr\}\,.
\end{align*}
Define $R:=V:=0$ if $\hat k=0,$ and define $R_+:=V_+:=0$ if $\hat k_+=0.$
Then, it holds that
\begin{align*}
\E\bigg[ \frac{V}{R+q^{-1}} \bigg]\le q~~~\text{and}~~~\E\bigg[ \frac{V_+}{R_+\vee 1} \bigg]\le q\,.
\end{align*}
\end{lemma}
\noindent This lemma ensures that the Selective  Sequential  Hypothesis  Testing~I controls a quantity close to the FDR and that the Selective  Sequential  Hypothesis  Testing~II achieves exact FDR control.
These guarantees will  be transferred to the KO and KO+ schemes later by showing that these schemes can be formulated as Selective  Sequential  Hypothesis  Testing~I and~II also.

\begin{proof}[Proof of Lemma~\ref{re:seqtesting}]
We start with the Selective  Sequential  Hypothesis  Testing~I. 
The number of total discoveries is always at least as large as the number of false discoveries: $R\geq V$.
Hence, $R=0$ implies $V=0$, and then it's easy to see that the desired inequalities hold (and are actually equalities).
We can thus assume without loss of generality that $R>0$ in the following.

Using the definition of $V$ as the number of false discoveries, the definition of $R$ as the total number of discoveries, and expanding the fraction, we find
\begin{align*}
&\E\biggl[ \frac{V}{R+q^{-1}} \biggr]\\
=&\E\Biggl[\frac{\#\bigl\{l\in\seq{\hat k}: p_l\text{~is null and~}p_l\le 1/2\bigr\}}{1+\#\bigl\{1\le l\le \hat k: p_l\text{~is null and~}p_l> 1/2\bigr\}} \\
&~~\cdot \frac{1+\#\bigl\{l\in\seq{\hat k}: p_l\text{~is null and~}p_l> 1/2\bigr\}}{R+q^{-1}} \Biggr] \,.
\end{align*}

The number of falsly rejected hypothesis is at most as large as the total number of rejected hypotheses
\begin{equation*}
\#\bigl\{l\in\seq{\hat k}: p_l\text{~is null and~}p_l> 1/2\} \\
\le  \#\bigl\{l\in\seq{\hat k}: p_l> 1/2\bigr\}\,.
\end{equation*}
Moreover, since $R> 0$, the definition of $\hat k$ yields that
\begin{equation*}
 \#\bigl\{l\in\seq{\hat k}: p_l> 1/2\bigr\}
\le q\cdot R\,.
\end{equation*}
Combining these two results gives
\begin{equation*}
\#\bigl\{l\in\seq{\hat k}: p_l\text{~is null and~}p_l> 1/2\bigr\} \le  q\cdot R\,.
\end{equation*}
Plugging this into the previous display and some rearranging provides us with
\begin{align*}
\E\biggl[ \frac{V}{R+q^{-1}} \biggr]&\le \E\Biggl[\frac{\#\bigl\{l\in\seq{\hat k}: p_l\text{~is null and~}p_l\le 1/2\bigr\}}{1+\#\bigl\{l\in\seq{\hat k}:p_l\text{~is null and~} p_l> 1/2\bigr\}} \Biggr]\cdot \frac{1+ q \cdot R}{R+q^{-1}}  \\
&= \E\Biggl[ \frac{\#\bigl\{l\in\seq{\hat k}: p_l\text{~is null and~}p_l\le 1/2\bigr\}}{1+\#\bigl\{l\in\seq{\hat k}: p_l\text{~is null and~}p_l> 1/2\bigr\}}\Biggr] \cdot q\,.
\end{align*}
Inequality~(A.1) of Lemma~1 (martingale process) in the supplement to \cite{Barber2015} gives (set $c=1/2$)
\begin{equation*}
\E\Biggl[ \frac{\#\bigl\{l\in\seq{\hat k}: p_l\text{~is null and~}p_l\le 1/2\bigr\}}{1+\#\bigl\{l\in\seq{\hat k}:p_l\text{~is null and~} p_l> 1/2\bigr\}}\Biggr] \le 1\,.
\end{equation*}
(Here, we have used the assumptions on the p-values.)
Combining this with the previous display gives
\begin{equation*}
\E\bigg[ \frac{V}{R+q^{-1}} \bigg]\le q\,,
\end{equation*}
as desired.

We now prove the FDR control over Selective  Sequential  Hypothesis  Testing~II. 
By definitions of the total discoveries~$V_+$ and false discoveries~$R_+$, it holds that
$V_+=R_+=0$ when $\hat k_+=0$. We then find that
\begin{align*}
\E\bigg[ \frac{V_+}{R_+\vee 1}\cdot \1(0=\hat k_+)\bigg]=0\,,
\end{align*}
which implies
\begin{align*}
\E\bigg[ \frac{V_+}{R_+\vee 1}\bigg]=\E\bigg[ \frac{V_+}{R_+\vee 1}\cdot \1(0<\hat k_+)\bigg]\,.
\end{align*}

Using the definitions of $V_+$ and $R_+$, and expanding the fraction gives
\begin{align*}
&\E\bigg[ \frac{V_+}{R_+\vee 1}\bigg]\\
=&\E\Biggl[ \frac{\#\bigl\{l\in\seq{\hat k_+}: p_l\text{~is null and~}p_l\le 1/2\bigr\}}{1+\#\bigl\{l\in\seq{\hat k_+}: p_l\text{~is null and~}p_l> 1/2\bigr\}}\\
&\,\times \frac{1+\#\bigl\{l\in\seq{\hat k_+}:p_l\text{~is null and~} p_l> 1/2\bigr\}}{\#\bigl\{ l\in\seq{\hat k_+}: p_l\le 1/2\bigr\}\vee 1}\cdot \1( 0< \hat k_+)\Biggr]\,.
\end{align*}

The number of falsly rejected hypothesis is at most as large as the total number of rejected hypotheses
\begin{align*}
\#\bigl\{l\in\seq{\hat k_+}: p_l\text{~is null and~}p_l> 1/2\bigr\} \le  \#\bigl\{l\in\seq{\hat k_+}: p_l> 1/2\bigr\}\,.
\end{align*}
Moreover, by definition of $\hat k_+,$ it holds for $0<\hat k_+$ that
\begin{align*}
\frac{1+\#\bigl\{l\in\seq{\hat k_+}: p_l> 1/2\bigr\}}{\#\bigl\{ l\in\seq{\hat k_+}: p_l\le 1/2\bigr\}\vee 1}\le q\,.
\end{align*}
Combining these two results gives
\begin{align*}
\frac{1+\#\bigl\{l\in\seq{\hat k_+}: p_l\text{~is null and~}p_l> 1/2\bigr\}}{\#\bigl\{ l\in\seq{\hat k_+}: p_l\le 1/2\bigr\}\vee 1}\le  q\,.   
\end{align*}
Plugging this into previous display and some rearranging yields
\begin{align*}
\E\biggl[ \frac{V_+}{R_+\vee 1}\biggr]&\le \E\Biggl[ \frac{\#\bigl\{ l\in\seq{\hat k_+}: p_l\text{~is null and~}p_l\le 1/2\bigr\}}{1+\#\bigl\{l\in\seq{\hat k_+}: p_l\text{~is null and~}p_l> 1/2\bigr\}}\Biggr] \cdot q\,.
\end{align*}
Invoking Inequality of Lemma 1 (martingale process) in the supplement to \cite{Barber2015} again (set $c=1/2$), we find 
\begin{align*}
\E\Biggl[ \frac{\#\bigl\{l\in\seq{\hat k_+}: p_l\text{~is null and~}p_l\le 1/2\bigr\}}{1+\#\bigl\{l\in\seq{\hat k_+}: p_l\text{~is null and~}p_l> 1/2\bigr\}}\Biggr] \le 1\,.
\end{align*}
Combining this with the previous display gives
\begin{align*}
\E\biggl[ \frac{V_+}{R_+\vee 1}\biggr]\le q \end{align*}
as desired.
\end{proof}

We now show that the KO procedure is equivalent to the Selective Sequential Hypothesis Testing~I, and KO+ procedure can be framed as the Selective Sequential Hypothesis Testing~II.  
Then, the desired FDR control over KO and KO+ schemes follows directly from Lemma~\ref{re:seqtesting}.

\begin{proof}[Proof of Theorem~\ref{re:thm1} and Theorem~\ref{re:thm2}]
The proof has two steps:
First, we arrange  the elements of the matrix-valued statistics~$\statmat$ in decreasing absolute value and  define ``p-values'' for each null hypothesis~$\hypo$ based on the corresponding~$\statmat_{ij}.$ 
Second, we connect Selective Sequential Hypothesis Testing~I and the KO scheme as well as Selective Sequential Hypothesis Testing~II and the KO+ scheme and then apply Lemma~\ref{re:seqtesting}.

Define a set of index pairs by~$\setreordered:=\Bigl\{\statmat_{ij}:(i,j)\in\vset\times\vset, \statmat_{ij}\neq 0\Bigr\}$ and denote the cardinality of this set by~$\setreorderedsize:=\text{card}(\setreordered).$
Refer to the elements in $\setreordered$ by $\statmat^1,\dots,\statmat^{\setreorderedsize}$ in a non-increasing order (all elements are non-zero by definition of $\setreordered$):
\begin{equation*}
    |\statmat^1|\ge \cdots\ge |\statmat^{\setreorderedsize}|>0\,.
\end{equation*}

Define the set of indices $\K:=\Bigl\{k\in\{1,\dots,\setreorderedsize-1\big\}: |\statmat^k|>|\statmat^{k+1}|\Bigr\}\bigcup \{\setreorderedsize\}.$ We notice that $\K$ is the index set of unique non-zero values attained by $|\statmat^l|,l\in\{1,\dots,\setreorderedsize\}.$

Define corresponding p-values $p_l$, where $l\in\{1,\dots,\setreorderedsize\}$, based on the test statistic $\statmat^l$:
\begin{align*}
    p_l:=\begin{cases}
    \frac{1}{2}~~~~\statmat^l >0\\
     1~~~~\statmat^l<0\,.
    \end{cases}
\end{align*}
By Lemma~\ref{flipsign} (sign-flip),  $\statmat_{ij}$ is positive and negative equally likely for all zero-valued edges $(i,j)\in\{(k,l)\in\vset\times\vset:k\neq l, \covmat^{-1}_{kl}=0\},$ that is, 
\begin{equation*}
\Pr\Bigl(\statmat_{ij}>0\Bigr)=\Pr\Bigl(\statmat_{ij}<0\Bigr)=\frac{1}{2}\,.
\end{equation*}
Combining this with the definition of the p-value~$p_l,$
it holds that for any null p-value $p_l$ that
\begin{equation*}
\Pr\biggl(p_l=\frac{1}{2}\biggr)=\Pr (p_l=1)=\frac{1}{2}\,,
\end{equation*}
which implies $\Pr(p_l\le u)\le u$ for all $u\in[0,1].$ 
By Lemma~\ref{flipsign},  we find the null p-values are i.i.d., satisfy $\Pr(p_l \le u) \le u$ for any $u\in [0, 1]$, and are independent from the non-null p-values.
By definition of the  p-value~$p_l$,  it holds for any $k\in \K$ that 
\begin{equation*}
\#\bigl\{l\in\seq{k}: p_l>1/2\bigr\} = \#\Bigl\{ l\in\seq{k}:\statmat^l<0\Bigr\}\,.
\end{equation*}
Due to the assumed ordering  $|\statmat^1|\geq \dots\geq|\statmat^{\setreorderedsize}|>0$, we have
\begin{equation*}
   - |\statmat^1|\le \cdots\le - |\statmat^{\setreorderedsize}|<0\,.
\end{equation*}
So, it holds for any $\statmat^l<0$ that 
\begin{equation*}
  - |\statmat^1|\le\cdots\le-|\statmat^{l-1}|\le\statmat^l\le -|\statmat^{l+1}|\le\cdots\le -|\statmat^{\setreorderedsize}|\,,
\end{equation*}
which implies
\begin{equation*}
\#\Bigl\{l\in\seq{k}:\statmat^l<0\Bigr\}=\#\Bigl\{ l\in\seq{\setreorderedsize}:\statmat^l\le -|\statmat^k|\Bigr\} \,.
\end{equation*}
Combining this with the previous display yields
\begin{equation}\label{re:equiv1}
\#\bigl\{l\in\seq{k}: p_l>1/2\bigr\} = \#\Bigl\{ l\in\seq{\setreorderedsize}:\statmat^l\le -|\statmat^k|\Bigr\}\,.
\end{equation}
By the same arguments, we obtain
\begin{equation}\label{re:equiv2}
\#\bigl\{ l\in\seq{k}: p_l\le 1/2\bigr\}= \#\Bigl\{ l\in\seq{\setreorderedsize}:\statmat^l\ge |\statmat^k|\Bigr\}\,.
\end{equation}
Plugging these two displays together, we find
\begin{equation*}
\frac{\#\bigl\{l\in\seq{k}: p_l>1/2\bigr\}}{\#\bigl\{ l\in\seq{k}: p_l\le 1/2\bigr\}\vee 1}=\frac{\#\Bigl\{ l\in\seq{\setreorderedsize}:\statmat^l\le -|\statmat^k|\Bigr\}}{\#\Bigl\{ l\in\seq{\setreorderedsize}:\statmat^l\ge |\statmat^k|\Bigr\}\vee 1}\,.
\end{equation*}

Finding the largest $k\in \K$ such that the ratio on the left-hand side is below $q$ is---in view of the non-increasing ordering of the $|\statmat^k|$'s---equivalent to finding the smallest $|\statmat^k|$ over $k\in \K$ such that the right-hand side is below $q$.
By definition of the threshold value~$\hat k$ of Selective Sequential Hypothesis Testing~I in Display~\eqref{re:general_thold}, 
this means that 
\begin{equation*}
    \hat k = \max\Biggl\{k\in\K: \frac{\#\bigl\{l\in\seq{\setreorderedsize}:\statmat^l\le -|\statmat^k|\bigr\}}{\#\bigl\{ l\in\seq{\setreorderedsize}: \statmat^l\ge |\statmat^k|\bigr\}\vee 1}\le q\Biggr\}\,.
\end{equation*}

Comparing to the definition of the KO threshold in Display~\eqref{re:threshold}, we find that~$\statmat^{\hat k}$ is equal to~$\thold$.
This equality implies that the KO scheme is equivalent to the Selective  Sequential  Hypothesis  Testing~I,
which gives us the desired FDR control. %

Plugging~\eqref{re:equiv1} and~\eqref{re:equiv2} together, it also holds for $k\in\K$ that
\begin{equation*}
\frac{1+\#\bigl\{l\in\seq{k}: p_l>1/2\bigr\}}{\#\bigl\{ l\in\seq{k}: p_l\le 1/2\bigr\}\vee 1}=\frac{1+\#\Bigl\{l\in\seq{\setreorderedsize}:\statmat^l\le -|\statmat^k|\Bigr\}}{\#\Bigl\{ l\in\seq{\setreorderedsize}:\statmat^l\ge |\statmat^k|\Bigr\}\vee 1}\,.    
\end{equation*}
By the definition of the threshold value~$\hat k_+$ of the Selective Sequential Hypothesis Testing~II in Display~\eqref{re:general_thold+}, 
this means that 
\begin{equation*}
    \hat k_+ = \max\Biggl\{k\in\K: \frac{1+\#\bigl\{l\in\seq{\setreorderedsize}:\statmat^l\le -|\statmat^k|\bigr\}}{\#\bigl\{l\in\seq{\setreorderedsize} : \statmat^l\ge |\statmat^k|\bigr\}\vee 1}\le q\Biggr\}\,.
\end{equation*}
Comparing to the definition of the KO+ threshold in Display~\eqref{re:threshold+}, we find that~$\statmat^{\hat k_+}$ is equal to~$\thold_+$.
This equality implies that the KO scheme is equivalent to the Selective  Sequential  Hypothesis  Testing~II.
The desired FDR control of KO+ scheme follows from Lemma~\ref{re:seqtesting}.

\end{proof}

\end{document}

% --- supplement: supplement.tex ---

\onecolumn
\aistatstitle{Supplement to ``False Discovery Rates in Biological Networks"}

This supplement contains the intuition for the proposed KO+ scheme and proofs of the main results in Section~3.2.

\section{Further Intuition}\label{app:intuition}

We now provide some intuition for the proposed KO+ method.
 To motivate this additional ``+1'' in the KO+ scheme, we consider the FDP for the KO+ pipeline with threshold $\hat t_+$ defined in~(11):
\begin{align*}
\fdp(\hat t_+)=\,&\,\frac{\#\big\{ (i,j):(i,j)\notin \eset, \statmat_{ij}\ge \hat{t}_+\big\}}{\#\{ (i,j):\statmat_{ij}\ge \hat{t}_+\}\vee 1}\\
=\,&\,\frac{\#\big\{ (i,j):(i,j)\notin \eset', \statmat_{ij}\ge \hat{t}_+\big\}}{\#\{ (i,j):\statmat_{ij}\ge \hat{t}_+\}\vee 1}\\
\le\, &\, \frac{\#\{ (i,j):(i,j)\notin \eset' , \statmat_{ij}\ge \hat{t}_+\}}{1+\#\{ (i,j):(i,j)\notin \eset' ,\statmat_{ij}\le -\hat{t}_+\}}
\cdot 
\frac{1+\#\{ (i,j): \statmat_{ij}\le -\hat{t}_+\}}{\#\{ (i,j):\statmat_{ij}\ge \hat{t}_+\}\vee 1}\cdot \\
\le\, &\, \frac{\#\{ (i,j):(i,j)\notin \eset' , \statmat_{ij}\ge \hat{t}_+\}}{1+\#\{ (i,j):(i,j)\notin \eset' ,\statmat_{ij}\le -\hat{t}_+\}}\cdot q\,.
\end{align*}
The first inequality follows from
\begin{equation*}
 \#\{ (i,j):(i,j)\notin \eset' , \statmat_{ij}\ge \hat{t}_+\} \le \#\{ (i,j): \statmat_{ij}\ge \hat{t}_+\}\,,
\end{equation*}
and the second inequality follows from the definition of $\hat{t}_+$.
Using martingale theory, we prove in Section~\ref{app:proofs} in this supplement that 
\begin{align*}
\E\Biggl[\frac{\#\{ (i,j):(i,j)\notin \esetMod, \statmat_{ij}\ge \hat{t}_+\}}{1+\#\{ (i,j):(i,j)\notin \esetMod ,\statmat_{ij}\le -\hat{t}_+\}}\Biggr]\le 1\,.
\end{align*}

\vspace{8cm}

\clearpage

\section{Proofs}\label{app:proofs}
The agenda of this section is to establish proofs for Theorems~3.1 and~3.2.
For this, we define the notion of swapping and study the matrix-valued test statistic~$\statmat\in\rpp.$ 
We write $\statmat$ as $\statmat(\cormat,\nullcormat)$ to emphasize that $\statmat$ is a function of $\cormat$ and $\nullcormat.$

The basis for the proofs is the idea of swapping.
\newcommand{\swapop}{\operatorname{Sub}}
\begin{definition}[Swapping]
Given an edge set $\Generaledge\subset\vset\times\vset$ and a matrix~$M\in\rpp$, we define the substitution operator $\swapop_{\Generaledge,M}:\rpp\to\rpp$ as
\begin{align*}
    A&\mapsto\swapop_{\Generaledge,M}(A):=\begin{cases}
     M_{ij}~~~&\text{if}~(i,j)\in\Generaledge\\
     A_{ij}~~~&\text{if}~(i,j)\notin\Generaledge\,.
    \end{cases}
\end{align*}
We then define the corresponding swapped test matrix as
\begin{equation*}
    \statmat_{\Generaledge}:=\statmat\bigl(\swapop_{\Generaledge,\nullcormat}(\cormat),\swapop_{\Generaledge,\cormat}(\nullcormat)\bigr)\,.
\end{equation*}
\end{definition}
\noindent Given an edge set~$\Generaledge$ and a matrix~$M$, the operator $\swapop_{\Generaledge,M}(A)$ substitutes the elements of~$A$ that have indexes in  $\Generaledge$ by the corresponding elements of~$M$.
Hence,  as compared to the original test matrix~$\statmat$, the new test matrix~$\statmat_{\Generaledge}\equiv\statmat_{\Generaledge}(\cormat,\nullcormat)$ has the entries of~$\cormat$ and~$\nullcormat$ that have indexes in~$\Generaledge$ swapped.
We will see that 
the elements of $\statmat$ and $\statmat_{\Generaledge}$ that correspond to a zero-valued edge have the same distribution, 
while the distributions of other elements can differ.
This gives us leverage for assessing the number of zero-valued edges in a given set~$\Generaledge$.

The swapped test statistics still has an explicit formulation. 
By definition of the original test matrix in~(9), we find
\begin{equation}\tag{12}\label{re:substitution}
 (\statmat_{\Generaledge})_{ij}=
 \begin{cases}
 (\statun_{ij}\vee \statu_{ij})\cdot \sign (\statun_{ij}-\statu_{ij})~~~&\text{if}~(i,j)\in\Generaledge\\
  (\statu_{ij}\vee \statun_{ij})\cdot \sign (\statu_{ij}-\statun_{ij})~~~&\text{if}~i\neq j \text{ and }(i,j)\notin\Generaledge\\
  0 ~~~~~~~~~~~~~~~~~~~~~~~~~~~~~~~~~~~~~&\text{if } i = j\,.
\end{cases}
\end{equation}
This means that $\statmat_{\Generaledge}$ is an ``antisymmetric'' version of $\statmat$:
\begin{lemma}[Antisymmetry]\label{antisymmetry}
For every edge set~$\Generaledge\subset\{(k,l)\in\vset\times\vset:k\neq l\}$, it holds that
\begin{align*}
 (\statmat_{\Generaledge})_{ij}=\statmat_{ij}\cdot
 \begin{cases}
 +1& (i,j)\notin\Generaledge \\ 
 -1& (i,j)\in\Generaledge\,.
\end{cases}   
\end{align*}
\end{lemma}
\noindent Hence, swapping  two entries~$\cormat_{ij}, \nullcormat_{ij}$ effects in switching  signs in~$\statmat_{ij}.$

\begin{proof}[Proof of Lemma~\ref{antisymmetry}]
This follows directly from comparing Displays~(9) and~\eqref{re:substitution}.
\end{proof}

Now, we show that the coordinates of~$\statmat$ and~$\statmat_{\Generaledge}$ that correspond to a zero-valued edge are equal in distribution.

\begin{lemma}[Exchangeability]\label{exchangeability}
For every zero-valued edge~$(i,j)\in\{(k,l)\in\vset\times\vset:k\neq l,\covmat^{-1}_{kl}=0,x_k\perp x_l\}$, 
it holds that 
\begin{equation*}
(\statmat_{\Generaledge})_{ij}=_d \statmat_{ij}\,,
\end{equation*}
where~$\Generaledge\subset\{(k,l)\in\vset\times\vset:k\neq l\}$ is an arbitrary set of edges and $=_d $ means equality in distribution.
\end{lemma}

\begin{proof}[Proof of Lemma~\ref{exchangeability}]
Our construction of the knock-offs~in (4) ensures that the sample partial correlation of a zero-valued edge $(i,j)$ and the corresponding knock-off version have the same distribution:
 $\nullcormat_{ij}=_d \cormat_{ij}$.
 This equality in distribution remains true under elementwise thresholding, 
 so that also the corresponding elements of $\statu$ and $\statun$  in~(7) and~(8), respectively, are equal in distribution: 
$\statu_{ij}=_d\statun_{ij}$.
This implies that $\sign(\statu_{ij}-\statun_{ij})=_{d}\sign(\statun_{ij}-\statu_{ij})$ (and $\statu_{ij}\vee\statun_{ij}=\statun_{ij}\vee\statu_{ij}$ anyway).
Hence, in view of the definitions of the  test statistics $\statmat$ and $\statmat_{\Generaledge}$   in (9) and~\eqref{re:substitution}, respectively,
we find
$(\statmat_{\Generaledge})_{ij}=_{d}\statmat_{ij}$,
as desired.
\end{proof}

We are now ready to discuss the signs of $\statmat_{ij}$.
The below result will be used in the 
proofs of Theorems~3.1 and~3.2.

\begin{lemma}[Sign-Flip]\label{flipsign}
 For every zero-valued edge~$(i,j)\in\{(k,l)\in\vset\times\vset:k\neq l,\covmat^{-1}_{kl}=0,x_k\perp  x_l\},$
 it holds that
 \begin{equation*}
\statmat_{ij}=_d  -\statmat_{ij}\,.
\end{equation*}
\end{lemma}
\noindent This lemma justifies our previous statement that
\begin{equation*}
    \#\Bigr\{(i,j):(i,j)\notin\esetMod, \statmat_{ij}\le -t\Bigr\}=_d\#\Bigr\{(i,j):(i,j)\notin\esetMod, \statmat_{ij}\ge t\Bigr\}\,.
\end{equation*} 
\begin{proof}[Proof of Lemma~\ref{flipsign}]
Define~\Generaledge\ as the set that only contains the zero-valued edge in question:  $\Generaledge:=\{(i,j)\}$.
 Lemma~\ref{antisymmetry} then yields
\begin{equation*}
    \bigr(\statmat_{\Generaledge}\bigr)_{ij}=\statmat_{ij}\cdot (-1)\,, 
\end{equation*}
while Lemma~\ref{exchangeability} yields
\begin{equation*}
(\statmat_{\Generaledge})_{ij}=_d \statmat_{ij}\,.
\end{equation*}
Combining these two identities concludes the proof.\\
\end{proof}

We now prove Theorems~3.1 and~3.2.
For this, we start with two sequential hypothesis testing procedures, together with the theoretical results for FDR control. 
Then, we relate these two procedures to KO and KO+ to prove Theorems~3.1 and~3.2.

We first introduce the two selective sequential hypothesis testing procedures.
Consider a sequence of null hypotheses~$\h_1,\dots,\h_N$ and corresponding ``p-values''~$p_1,\dots,p_N$.
The values $p_1,\dots,p_N$ are not necessarily p-values in a traditional sense, but we will still refer them like that, because they play the same role as p-values here;
in particular, they will need to stochastically dominate a standard uniform random variable, that is, $\Pr (p_l\le u)\le u$ for any $u\in[0,1]$,
which is a typical assumption on traditional p-values---see~\cite[Page 1828]{Ferreira2006}.

We say that a p-value~$p_l$ is a null p-value if the null hypothesis~$\h_l$ is true, and we say $p_l$ is a non-null p-value if $\h_l$ is false with $l\in\{1,\dots,N\}$.

\textit{Selective Sequential Hypothesis Testing~I:}
For the threshold value $1/2$ and any subset $\K\subset\seq{N},$ define
\begin{equation}\tag{13}\label{re:general_thold}
    \hat k := \max \Biggl\{ k\in \K:\frac{\#\bigl\{l\in\seq{k}:p_l> 1/2\bigr\}}{\#\bigl\{l\in\seq{k}:p_l\le 1/2\bigl\}\vee 1}\le  q  \Biggr\}\,.
\end{equation}
Set $\hat k:=0$ if the above set is empty. 
We reject $\h_k$ for all $k\le \hat k$ with $p_k\le 1/2.$
We will see that this procedure achieves the approximate FDR control. Moreover, the KO scheme can be framed as this procedure.

\textit{Selective Sequential Hypothesis Testing~II:}
For the threshold value $1/2$ and any subset $\K\subset\{1,2,\dots, N\},$ define
\begin{equation}\tag{14}\label{re:general_thold+}
    \hat k_+ := \max \Bigg\{ k\in \K:\frac{1+\#\bigl\{l\in\seq{k}:p_l> 1/2\bigr\}}{\#\bigl\{l\in\seq{k}:p_l\le 1/2\bigr\}\vee 1}\le q  \Bigg\}\,.
\end{equation}
Set $\hat k_+:=0$ if the corresponding set is empty.
We reject $\h_k$ for all $k\le \hat k_+$ with $p_k\le 1/2.$
We will also see that this procedure achieves the exact FDR control. Moreover, the KO+ scheme can be cast as this procedure.

Our next result guarantees  FDR control over the Selective  Sequential  Hypothesis  Testing~I and~II.

\begin{lemma}[FDR Control Over the Hypothesis  Testing I  and II]\label{re:seqtesting}
Consider the two selective sequential procedures described above, 
and suppose that the null p-values are i.i.d.,  satisfy $\Pr (p_l\le u)\le u$  for any $u\in[0,1]$, and are independent from the non-null p-values. 
Let $V, V_+$ be the numbers of false discoveries of the two procedures, that is,
\begin{align*}
V:=&\#\bigl\{l\in\seq{\hat k}: p_l\text{~is null and~}p_l\le 1/2\bigr\}\\
V_+:=&\#\bigl\{l\in\seq{\hat k_+}: p_l\text{~is null and~}p_l\le 1/2\bigr\}\,,
\end{align*}
and $R,R_+$ be the total number of discoveries of the two procedures, that is,
\begin{align*}
R:=&\#\bigl\{l\in\seq{\hat k}: p_l\le 1/2\bigr\}\\
R_+:=&\#\bigl\{l\in\seq{\hat k_+}: p_l\le 1/2\bigr\}\,.
\end{align*}
Define $R:=V:=0$ if $\hat k=0,$ and define $R_+:=V_+:=0$ if $\hat k_+=0.$
Then, it holds that
\begin{align*}
\E\bigg[ \frac{V}{R+q^{-1}} \bigg]\le q~~~\text{and}~~~\E\bigg[ \frac{V_+}{R_+\vee 1} \bigg]\le q\,.
\end{align*}
\end{lemma}
\noindent This lemma ensures that the Selective  Sequential  Hypothesis  Testing~I controls a quantity close to the FDR and that the Selective  Sequential  Hypothesis  Testing~II achieves exact FDR control.
These guarantees will  be transferred to the KO and KO+ schemes later by showing that these schemes can be formulated as Selective  Sequential  Hypothesis  Testing~I and~II also.

\begin{proof}[Proof of Lemma~\ref{re:seqtesting}]
We start with the Selective  Sequential  Hypothesis  Testing~I. 
The number of total discoveries is always at least as large as the number of false discoveries: $R\geq V$.
Hence, $R=0$ implies $V=0$, and then it's easy to see that the desired inequalities hold (and are actually equalities).
We can thus assume without loss of generality that $R>0$ in the following.

Using the definition of $V$ as the number of false discoveries, the definition of $R$ as the total number of discoveries, and expanding the fraction, we find
\begin{align*}
&\E\biggl[ \frac{V}{R+q^{-1}} \biggr]\\
=&\E\Biggl[\frac{\#\bigl\{l\in\seq{\hat k}: p_l\text{~is null and~}p_l\le 1/2\bigr\}}{1+\#\bigl\{1\le l\le \hat k: p_l\text{~is null and~}p_l> 1/2\bigr\}} \cdot \frac{1+\#\bigl\{l\in\seq{\hat k}: p_l\text{~is null and~}p_l> 1/2\bigr\}}{R+q^{-1}} \Biggr] \,.
\end{align*}

The number of falsly rejected hypothesis is at most as large as the total number of rejected hypotheses
\begin{equation*}
\#\bigl\{l\in\seq{\hat k}: p_l\text{~is null and~}p_l> 1/2\} \le  \#\bigl\{l\in\seq{\hat k}: p_l> 1/2\bigr\}\,.
\end{equation*}
Moreover, since $R> 0$, the definition of $\hat k$ yields that
\begin{equation*}
 \#\bigl\{l\in\seq{\hat k}: p_l> 1/2\bigr\}
\le q\cdot R\,.
\end{equation*}
Combining these two results gives
\begin{equation*}
\#\bigl\{l\in\seq{\hat k}: p_l\text{~is null and~}p_l> 1/2\bigr\} \le  q\cdot R\,.
\end{equation*}
Plugging this into the previous display and some rearranging provides us with
\begin{align*}
\E\biggl[ \frac{V}{R+q^{-1}} \biggr]&\le \E\Biggl[\frac{\#\bigl\{l\in\seq{\hat k}: p_l\text{~is null and~}p_l\le 1/2\bigr\}}{1+\#\bigl\{l\in\seq{\hat k}:p_l\text{~is null and~} p_l> 1/2\bigr\}} \Biggr]\cdot \frac{1+ q \cdot R}{R+q^{-1}}  \\
&= \E\Biggl[ \frac{\#\bigl\{l\in\seq{\hat k}: p_l\text{~is null and~}p_l\le 1/2\bigr\}}{1+\#\bigl\{l\in\seq{\hat k}: p_l\text{~is null and~}p_l> 1/2\bigr\}}\Biggr] \cdot q\,.
\end{align*}
Inequality~(A.1) of Lemma~1 (martingale process) in the supplement to~\cite{Barber2015} gives (set $c=1/2$)
\begin{equation*}
\E\Biggl[ \frac{\#\bigl\{l\in\seq{\hat k}: p_l\text{~is null and~}p_l\le 1/2\bigr\}}{1+\#\bigl\{l\in\seq{\hat k}:p_l\text{~is null and~} p_l> 1/2\bigr\}}\Biggr] \le 1\,.
\end{equation*}
(Here, we have used the assumptions on the p-values.)
Combining this with the previous display gives
\begin{equation*}
\E\bigg[ \frac{V}{R+q^{-1}} \bigg]\le q\,,
\end{equation*}
as desired.

We now prove the FDR control over Selective  Sequential  Hypothesis  Testing~II. 
By definitions of the total discoveries~$V_+$ and false discoveries~$R_+$, it holds that
$V_+=R_+=0$ when $\hat k_+=0$. We then find that
\begin{align*}
\E\bigg[ \frac{V_+}{R_+\vee 1}\cdot \1(0=\hat k_+)\bigg]=0\,,
\end{align*}
which implies
\begin{align*}
\E\bigg[ \frac{V_+}{R_+\vee 1}\bigg]=\E\bigg[ \frac{V_+}{R_+\vee 1}\cdot \1(0<\hat k_+)\bigg]\,.
\end{align*}

Using the definitions of $V_+$ and $R_+$, and expanding the fraction gives
\begin{align*}
&\E\bigg[ \frac{V_+}{R_+\vee 1}\bigg]\\
=&\E\Biggl[ \frac{\#\bigl\{l\in\seq{\hat k_+}: p_l\text{~is null and~}p_l\le 1/2\bigr\}}{1+\#\bigl\{l\in\seq{\hat k_+}: p_l\text{~is null and~}p_l> 1/2\bigr\}}\\
&\,\times \frac{1+\#\bigl\{l\in\seq{\hat k_+}:p_l\text{~is null and~} p_l> 1/2\bigr\}}{\#\bigl\{ l\in\seq{\hat k_+}: p_l\le 1/2\bigr\}\vee 1}\cdot \1( 0< \hat k_+)\Biggr]\,.
\end{align*}

The number of falsly rejected hypothesis is at most as large as the total number of rejected hypotheses
\begin{align*}
\#\bigl\{l\in\seq{\hat k_+}: p_l\text{~is null and~}p_l> 1/2\bigr\} \le  \#\bigl\{l\in\seq{\hat k_+}: p_l> 1/2\bigr\}\,.
\end{align*}
Moreover, by definition of $\hat k_+,$ it holds for $0<\hat k_+$ that
\begin{align*}
\frac{1+\#\bigl\{l\in\seq{\hat k_+}: p_l> 1/2\bigr\}}{\#\bigl\{ l\in\seq{\hat k_+}: p_l\le 1/2\bigr\}\vee 1}\le q\,.
\end{align*}
Combining these two results gives
\begin{align*}
\frac{1+\#\bigl\{l\in\seq{\hat k_+}: p_l\text{~is null and~}p_l> 1/2\bigr\}}{\#\bigl\{ l\in\seq{\hat k_+}: p_l\le 1/2\bigr\}\vee 1}\le  q\,.   
\end{align*}
Plugging this into previous display and some rearranging yields
\begin{align*}
\E\biggl[ \frac{V_+}{R_+\vee 1}\biggr]&\le \E\Biggl[ \frac{\#\bigl\{ l\in\seq{\hat k_+}: p_l\text{~is null and~}p_l\le 1/2\bigr\}}{1+\#\bigl\{l\in\seq{\hat k_+}: p_l\text{~is null and~}p_l> 1/2\bigr\}}\Biggr] \cdot q\,.
\end{align*}
Invoking Inequality of Lemma 1 (martingale process) in the supplement to \cite{Barber2015} again (set $c=1/2$), we find 
\begin{align*}
\E\Biggl[ \frac{\#\bigl\{l\in\seq{\hat k_+}: p_l\text{~is null and~}p_l\le 1/2\bigr\}}{1+\#\bigl\{l\in\seq{\hat k_+}: p_l\text{~is null and~}p_l> 1/2\bigr\}}\Biggr] \le 1\,.
\end{align*}
Combining this with the previous display gives
\begin{align*}
\E\biggl[ \frac{V_+}{R_+\vee 1}\biggr]\le q \end{align*}
as desired.
\end{proof}

We now show that the KO procedure is equivalent to the Selective Sequential Hypothesis Testing~I, and KO+ procedure can be framed as the Selective Sequential Hypothesis Testing~II.  
Then, the desired FDR control over KO and KO+ schemes follows directly from Lemma~\ref{re:seqtesting}.

\begin{proof}[Proof of Theorem~3.1 and Theorem~3.2]
The proof has two steps:
First, we arrange  the elements of the matrix-valued statistics~$\statmat$ in decreasing absolute value and  define ``p-values'' for each null hypothesis~$\hypo$ based on the corresponding~$\statmat_{ij}.$ 
Second, we connect Selective Sequential Hypothesis Testing~I and the KO scheme as well as Selective Sequential Hypothesis Testing~II and the KO+ scheme and then apply Lemma~\ref{re:seqtesting}.

Define a set of index pairs by~$\setreordered:=\Bigl\{\statmat_{ij}:(i,j)\in\vset\times\vset, \statmat_{ij}\neq 0\Bigr\}$ and denote the cardinality of this set by~$\setreorderedsize:=\text{card}(\setreordered).$
Refer to the elements in $\setreordered$ by $\statmat^1,\dots,\statmat^{\setreorderedsize}$ in a non-increasing order (all elements are non-zero by definition of $\setreordered$):
\begin{equation*}
    |\statmat^1|\ge \cdots\ge |\statmat^{\setreorderedsize}|>0\,.
\end{equation*}

Define the set of indices $\K:=\Bigl\{k\in\{1,\dots,\setreorderedsize-1\big\}: |\statmat^k|>|\statmat^{k+1}|\Bigr\}\bigcup \{\setreorderedsize\}.$ We notice that $\K$ is the index set of unique non-zero values attained by $|\statmat^l|,l\in\{1,\dots,\setreorderedsize\}.$

Define corresponding p-values $p_l$, where $l\in\{1,\dots,\setreorderedsize\}$, based on the test statistic $\statmat^l$:
\begin{align*}
    p_l:=\begin{cases}
    \frac{1}{2}~~~~\statmat^l >0\\
     1~~~~\statmat^l<0\,.
    \end{cases}
\end{align*}
By Lemma~\ref{flipsign} (sign-flip),  $\statmat_{ij}$ is positive and negative equally likely for all zero-valued edges $(i,j)\in\{(k,l)\in\vset\times\vset:k\neq l, \covmat^{-1}_{kl}=0\},$ that is, 
\begin{equation*}
\Pr\Bigl(\statmat_{ij}>0\Bigr)=\Pr\Bigl(\statmat_{ij}<0\Bigr)=\frac{1}{2}\,.
\end{equation*}
Combining this with the definition of the p-value~$p_l,$
it holds that for any null p-value $p_l$ that
\begin{equation*}
\Pr\biggl(p_l=\frac{1}{2}\biggr)=\Pr (p_l=1)=\frac{1}{2}\,,
\end{equation*}
which implies $\Pr(p_l\le u)\le u$ for all $u\in[0,1].$ 
By Lemma~\ref{flipsign},  we find the null p-values are i.i.d., satisfy $\Pr(p_l \le u) \le u$ for any $u\in [0, 1]$, and are independent from the non-null p-values.
By definition of the  p-value~$p_l$,  it holds for any $k\in \K$ that 
\begin{equation*}
\#\bigl\{l\in\seq{k}: p_l>1/2\bigr\} = \#\Bigl\{ l\in\seq{k}:\statmat^l<0\Bigr\}\,.
\end{equation*}
Due to the assumed ordering  $|\statmat^1|\geq \dots\geq|\statmat^{\setreorderedsize}|>0$, we have
\begin{equation*}
   - |\statmat^1|\le \cdots\le - |\statmat^{\setreorderedsize}|<0\,.
\end{equation*}
So, it holds for any $\statmat^l<0$ that 
\begin{equation*}
  - |\statmat^1|\le\cdots\le-|\statmat^{l-1}|\le\statmat^l\le -|\statmat^{l+1}|\le\cdots\le -|\statmat^{\setreorderedsize}|\,,
\end{equation*}
which implies
\begin{equation*}
\#\Bigl\{l\in\seq{k}:\statmat^l<0\Bigr\}=\#\Bigl\{ l\in\seq{\setreorderedsize}:\statmat^l\le -|\statmat^k|\Bigr\} \,.
\end{equation*}
Combining this with the previous display yields
\begin{equation}\tag{15}\label{re:equiv1}
\#\bigl\{l\in\seq{k}: p_l>1/2\bigr\} = \#\Bigl\{ l\in\seq{\setreorderedsize}:\statmat^l\le -|\statmat^k|\Bigr\}\,.
\end{equation}
By the same arguments, we obtain
\begin{equation}\tag{16}\label{re:equiv2}
\#\bigl\{ l\in\seq{k}: p_l\le 1/2\bigr\}= \#\Bigl\{ l\in\seq{\setreorderedsize}:\statmat^l\ge |\statmat^k|\Bigr\}\,.
\end{equation}
Plugging these two displays together, we find
\begin{equation*}
\frac{\#\bigl\{l\in\seq{k}: p_l>1/2\bigr\}}{\#\bigl\{ l\in\seq{k}: p_l\le 1/2\bigr\}\vee 1}=\frac{\#\Bigl\{ l\in\seq{\setreorderedsize}:\statmat^l\le -|\statmat^k|\Bigr\}}{\#\Bigl\{ l\in\seq{\setreorderedsize}:\statmat^l\ge |\statmat^k|\Bigr\}\vee 1}\,.
\end{equation*}

Finding the largest $k\in \K$ such that the ratio on the left-hand side is below $q$ is---in view of the non-increasing ordering of the $|\statmat^k|$'s---equivalent to finding the smallest $|\statmat^k|$ over $k\in \K$ such that the right-hand side is below $q$.
By definition of the threshold value~$\hat k$ of Selective Sequential Hypothesis Testing~I in Display~\eqref{re:general_thold}, 
this means that 
\begin{equation*}
    \hat k = \max\Biggl\{k\in\K: \frac{\#\bigl\{l\in\seq{\setreorderedsize}:\statmat^l\le -|\statmat^k|\bigr\}}{\#\bigl\{ l\in\seq{\setreorderedsize}: \statmat^l\ge |\statmat^k|\bigr\}\vee 1}\le q\Biggr\}\,.
\end{equation*}

Comparing to the definition of the KO threshold in Display~(10), we find that~$\statmat^{\hat k}$ is equal to~$\thold$.
This equality implies that the KO scheme is equivalent to the Selective  Sequential  Hypothesis  Testing~I,
which gives us the desired FDR control. %

Plugging~\eqref{re:equiv1} and~\eqref{re:equiv2} together, it also holds for $k\in\K$ that
\begin{equation*}
\frac{1+\#\bigl\{l\in\seq{k}: p_l>1/2\bigr\}}{\#\bigl\{ l\in\seq{k}: p_l\le 1/2\bigr\}\vee 1}=\frac{1+\#\Bigl\{l\in\seq{\setreorderedsize}:\statmat^l\le -|\statmat^k|\Bigr\}}{\#\Bigl\{ l\in\seq{\setreorderedsize}:\statmat^l\ge |\statmat^k|\Bigr\}\vee 1}\,.    
\end{equation*}
By the definition of the threshold value~$\hat k_+$ of the Selective Sequential Hypothesis Testing~II in Display~\eqref{re:general_thold+}, 
this means that 
\begin{equation*}
    \hat k_+ = \max\Biggl\{k\in\K: \frac{1+\#\bigl\{l\in\seq{\setreorderedsize}:\statmat^l\le -|\statmat^k|\bigr\}}{\#\bigl\{l\in\seq{\setreorderedsize} : \statmat^l\ge |\statmat^k|\bigr\}\vee 1}\le q\Biggr\}\,.
\end{equation*}
Comparing to the definition of the KO+ threshold in Display~(11), we find that~$\statmat^{\hat k_+}$ is equal to~$\thold_+$.
This equality implies that the KO scheme is equivalent to the Selective  Sequential  Hypothesis  Testing~II.
The desired FDR control of KO+ scheme follows from Lemma~\ref{re:seqtesting}.

\end{proof}

\bibliography{References.bib}